\setlist{nolistsep}
\newcommand{\citationneeded}[1][]{\textsuperscript{\color{blue} [citation needed]}}
\newtheorem{theorem}{Theorem}
\newcommand{\e}{\ensuremath\mathrm{e}}
\renewcommand{\i}{\ensuremath\mathrm{i}}
\DeclareMathOperator{\Herm}{Herm} % Hermitian operators
\newcommand{\1}{\mathds{1}}
\newcommand{\mc}[1]{\mathcal{#1}}
\newcommand{\mcH}{\mc{H}}
\renewcommand{\H}{\mcH}
\renewcommand{\vec}[1]{\boldsymbol{#1}}
\newcommand{\myleft}{\mathopen{}\mathclose\bgroup\left}
\newcommand{\myright}{\aftergroup\egroup\right}
\newcommand{\iiiNorm}[1]{{\left\vert\kern-0.25ex\left\vert\kern-0.25ex\left\vert #1 
    \right\vert\kern-0.25ex\right\vert\kern-0.25ex\right\vert}}
\newcommand{\sandwich}[3]
  {\left\langle  #1 \right| #2 \left| #3 \right\rangle}
\newcommand{\class}[1]{{\ensuremath{\mathsf{#1}}}}
\newcommand{\NP}{\class{NP}}
\newcommand{\MaxCut}{\class{MaxCut}}
\newcommand{\ThreeSAT}{\class{3-SAT}}
\newtheorem{prob}[theorem]{Problem}
\newtheorem{prop}[theorem]{Proposition}
\newtheorem{coro}[theorem]{Corollary}
\newtheorem{conj}[theorem]{Conjecture}
\renewcommand{\P}{\class{P}}
\newcommand{\OEx}[1]{\left<O(#1)\right>}
\newcommand{\CA}{\mathcal{A}}
\newcommand{\CU}{\mathcal{U}}
\begin{document}

%\preprint{APS/123-QED}

\title{Undecidable problems associated with variational quantum algorithms}% Force line breaks with \\
%\thanks{A footnote to the article title}%
\author{Georgios Korpas}

%\author{Guillermo Mu\~noz}
%\altaffiliation[Also at ]{Physics Department, XYZ University.}%Lines break automatically or can be forced with\\
%\email{Second.Author@institution.edu}
\affiliation{Department of Computer Science, Czech Technical University in Prague, Czechia}%
\affiliation{Quantum Technologies Group, Innovation \& Ventures, HSBC, Singapore}
\affiliation{Archimedes Research Unit on AI, Data Science and Algorithms, Athena RC, Marousi, Greece}%

\author{Vyacheslav Kungurtsev}
\author{Jakub Mareček}
\affiliation{Department of Computer Science, Czech Technical University in Prague, Czechia}%
%\homepage{http://www.Second.institution.edu/~Charlie.Author}

\date{\today}% It is always \today, today,
             %  but any date may be explicitly specified

%%%%%%%%%%%%%%%%%%%%%%%%
%%%%%%%%%%%%%%%%%%%%%%%%
%%%%%% ABSTRACT %%%%%%%%
%%%%%%%%%%%%%%%%%%%%%%%%
%%%%%%%%%%%%%%%%%%%%%%%%

\begin{abstract}
Variational Quantum Algorithms (VQAs), such as the Variational Quantum Eigensolver (VQE) and the Quantum Approximate Optimization Algorithm (QAOA), are widely studied as candidates for near-term quantum advantage. Recent work has shown that training VQAs is \NP-hard in general. In this paper, we present a conditional result suggesting that the training of VQAs is undecidable, even in idealized, noiseless settings. We reduce the decision version of the digitized VQA training problem—where circuit parameters are drawn from a discrete set—to the question of whether a universal Diophantine equation (UDE) has a root. This reduction relies on encoding the UDE into the structure of a variational quantum circuit via the matrix exponentials. The central step involves establishing a correspondence between the objective function of the VQA and a known UDE of 58 variables and degree 4. Our main result is conditional on a natural conjecture: that a certain system of structured complex polynomial equations—arising from the inner product of a VQA circuit output and a fixed observable—has at least one solution. We argue this conjecture is plausible based on dimension-counting arguments (degrees of freedom in the Hamiltonians, state vector, and observable), and the generic solvability of such systems in algebraic geometry over the complex numbers. Under this assumption, we suggest that deciding whether a digitized VQA achieves a given energy threshold is undecidable. This links the limitations of variational quantum algorithms to foundational questions in mathematics and logic, extending the known landscape of quantum computational hardness to include uncomputability. Additionally, we establish an unconditional undecidability result for VQA convergence in open quantum systems.
\end{abstract}

\maketitle
%\tableofcontents

%%%%%%%%%%%%%%%%%%%%%%%%
%%%%%%%%%%%%%%%%%%%%%%%%
%%%%% SECTION 1 %%%%%%%%
%%%% INTRODUCTION %%%%%%
%%%%%%%%%%%%%%%%%%%%%%%%
%%%%%%%%%%%%%%%%%%%%%%%%

\section{Introduction}

Quantum algorithms, especially those implementable on near-term hardware, have garnered significant attention in recent years~\cite{preskill2018quantum, cezero2021, huang2023near}. Many financial institutions are investing in quantum computing, motivated by its potential applications in risk assessment, portfolio optimization, and stochastic control \cite{egger2020quantum,  gilliam2021grover, herman2022survey, intallura2023survey} while many other industries follow this path as well. Variational Quantum Algorithms (VQAs), encompassing the Variational Quantum Eigensolver (VQE) \cite{peruzzo2014variational}, the Quantum Approximate Optimization Algorithm (QAOA) \cite{farhi2014quantum}, and their respective variations, have sometimes been advocated as a methodology for addressing \NP-hard optimization problems. See \cite{cezero2021} for an extensive review. These algorithms operate by iteratively evaluating a parameterized quantum circuit and updating its parameters to minimize a cost function—typically, the expectation value of an observable (see App. \ref{sec:appVQA}).

However, the expectation value of the quantum observable, as measured by a quantum computer, depends on specific parameters provided to a classical optimizer. The objective of this optimizer is to explore the parameter landscape, adjusting the parameters to minimize the observable’s expectation value. These refined parameters are then anticipated to yield a reduced estimate of the expectation value of the quantum observable, ultimately converging to the genuine minimum, ideally. This process—commonly referred to as VQA training—forms a hybrid quantum-classical loop.

The classical computation of parameters for the variational form is often achieved through the use of gradient-based approaches \cite{schuld2019evaluating} or quasi-Newton methods \cite{byrd1995limited}. In particular, it has been observed that even second-order techniques often converge to local minima and are not guaranteed to reach global optima \cite{cerezo2020impact}, and the computation of the optima, even to a predetermined precision, is \NP-hard \cite{bittel2021training}.
Independently, there are lower bounds \cite{Bravyi2021,bravyi2020obstacles} on the depth of parametrized circuits that one needs to consider when analyzing the efficiency of variational quantum algorithms. The impact of noise on the iteration complexity of VQAs was studied in \cite{kungurtsev2022iteration}.

In this work, we show that training Variational Quantum Algorithms (VQAs)—under certain assumptions—can be undecidable, extending the known complexity barriers beyond \NP-hardness. Specifically, we consider the conditions under which the classical computation of the parameters of the variational form becomes uncomputable. We also examine related problems in training VQAs and analyze whether QAOA converges in the presence of noise.

Our main result, Theorem~\ref{thm:digivqeundecidable}, connects the decision version of training VQAs to Hilbert's 10th problem by reducing it to the solvability of a universal Diophantine equation, assuming the existence of a common solution for a certain system of complex polynomial equations. In Conjecture~\ref{conj:system}, we argue that such a solution likely exists, based on degrees-of-freedom heuristics and the typical behavior of structured polynomial systems. We then show that this result extends naturally to QAOA.

Our investigation focuses primarily on worst-case behavior. These findings do not rule out the possibility that specific subclasses of VQAs may still be successfully trainable under more favorable conditions, which we briefly discuss in Section~\ref{sec:conclusions}.

\vspace{1em}

%In this article, we consider the conditions for the \textit{undecidability} of the classical computation of the parameters of the variational form, and certain other variants of the problem of training variational quantum algorithms,  as well as the question of whether QAOA converges in the presence of noise. Our main result is Theorem \ref{thm:digivqeundecidable}, which relates the decision version of training VQAs to Hilbert's 10th problem and Diophantine equations, assuming the existence of a common solution for a system of complex polynomial equations. In Conjecture \ref{conj:system}, we suggest that this solution does indeed exist. We then proceed to show that this applies to QAOA as a consequence. Within the context of Variational Quantum Algorithms and QAOAs, as elaborated in Section \ref{sec:conclusions}, our investigation focuses primarily on worst-case scenarios. These findings do not preclude the prospect of discovering special cases that could lead to the successful implementation of these algorithms under more favorable conditions. 

\paragraph*{Notation.} %§q1   §
We use the notation $[n]\coloneqq \{1,\dots,n\}$. 
The Pauli matrices are denoted by $\sigma_x$, $\sigma_y$, and $\sigma_z$ respectively. 
%An operator $X$ acting on subsystem $j$ of a larger quantum system is denoted by $X^{(j)}$, e.g., $\sigma_x^{(1)}$ is the Pauli-$x$-matrix acting on subsystem $1$. 
By $\Herm(\mathbb{C}^n)$ we denote the space of Hermitian $n \times n$ matrices, i.e., matrices $A \in \mathbb{C}^{n\times n}$ that satisfy $A= A^\dagger$, where $\dagger$ denotes complex conjugation. A complex
Hilbert space $\mathcal{H}$ of dimension $n$ is isomorphic to $\mathbb{C}^n$. By
$\|A\|$ we refer to the operator norm of an operator $A \in {\rm End}(\mathcal{H}) \cong \Herm(\mathbb{C}^n)$ while by $U(n) \subset {\rm GL}_n(\mathbb{C})$ we denote the degree $n$ unitary group. For a field $k$, we denote by $k[\vec x]$ the ring of polynomials with coefficients in $k$ where $\vec x = (x_1,\ldots, x_m)$, for some integer $m>0$, .

%%%%%%%%%%%%%%%%%%%%%%%%
%%%%%%%%%%%%%%%%%%%%%%%%
%%%%% SECTION 2 %%%%%%%%
%%% CORRESPONDANCE %%%%%
%%%%%%%%%%%%%%%%%%%%%%%%
%%%%%%%%%%%%%%%%%%%%%%%%

\section{Correspondence of VQA to Polynomial Equation Systems}

Initially, let us focus on the VQA minimization problem; specifically we state the VQA minimization problem as termed in \cite{bittel2021training} and also a digitized VQA minimization problem. These problems will be fundamental in the subsequent analysis.

\begin{prob}[{VQA} minimization problem, paraphrasing \cite{bittel2021training}]\label{p:VQA}\hfill
  \begin{description}
     \item[Instance] An initial state $\ket{\Psi_0}\in \mathbb{C}^n$, a set of generators $\{H_i \}_{i \in \{1,\dots, L\}}\in \Herm(\mathbb{C}^n)$, where $L \in \mathbb{Z}_{>0}$ is the number of layers and an observable $O\in\Herm(\mathbb{C}^n)$. 
     \item[Optimization version] For $\ket{\Psi(\vec \phi)}\coloneqq U_L(\phi_L)\cdots U_1(\phi_1)\ket{\Psi_0}$ with $U_i(\phi_i)=\e^{-\i H_i \phi_i}$, 
    find $\vec \phi \in \mathbb{R}^L$ that minimizes $\OEx{\vec\phi}\coloneqq~\sandwich{\Psi(\vec \phi)}{O}{\Psi(\vec \phi)}$. 
    \item[Decision version for $a \in \mathbb{R}$] For $\ket{\Psi({\vec \phi})}\coloneqq U_L(\phi_L)\cdots U_1(\phi_1)\ket{\Psi_0}$ with $U_i(\phi_i)~=~\e^{-\i H_i \phi_i}$, determine if there exists $\vec \phi \in \mathbb{R}^L$ for which $\sandwich{\Psi(\vec \phi)}{O}{\Psi(\vec \phi)} \leq a$.
\end{description}
\end{prob}

\paragraph*{Example.} For illustration purposes, consider the $L=1$ case of Problem~\ref{p:VQA}. There, one seeks a parameter $\phi\in\mathbb{R}$ such that the state given by
$\ket{\Psi({\phi})}\coloneqq  U(\phi)\ket{\Psi_0}\coloneqq\e^{-\i H \phi}\ket{\Psi_0}$ satisfies
$\sandwich{\Psi({\phi})}{O}{\Psi({\phi})}\leq a \in \mathbb{R}$. The operator $O$ can be decomposed as 
\begin{align}\label{eq:O}
    O=\sum\limits_{j=1}^n \lambda_j P_{\lambda_j},
\end{align}
with $P_{\lambda_j}$ the associated eigenspace projection operators with eigenvalue $\lambda_j$, the associated unconstrained optimization problem becomes: 
\begin{equation}\label{eq:classicalopt1}
    \min\limits_{\phi\in\mathbb{R}}\, \sum\limits_{j=1}^n \lambda_j \left\| P_{\lambda_j} \e^{-\i H \phi}\ket{\Psi_0}\right\|^2
\end{equation}

\noindent
which is a continuously differentiable function of a real valued variable $\phi$.
Let $n=2$, $\ket{\Psi_0}=\alpha \ket{0}+\beta \ket{1}$, $O=\begin{psmallmatrix}2&0\\0& -1\end{psmallmatrix}$ and $H=\sigma_x$, so $\cos(H\phi) = \cos(\phi) I$ and $\sin(H\phi)=\sin(\phi)\sigma_x$.  Then, Problem~\eqref{eq:classicalopt1} becomes,
\[
\min\limits_{\phi\in\mathbb{R}} (2\alpha^2-\beta^2)\cos^2\phi -(2\beta^2-\alpha^2)\sin^2\phi
\]
(which fits the framework of Theorem~\ref{th:optund}) and does not have any known exploitable structure such as convexity. This already hints at the undecidability of VQA.

Nevertheless, we know the form of the optimization problem, and although one can consider that it is unlikely that a special structure-exploiting algorithm for minimization could be found, this argument cannot guarantee that one does not exist.

Thus, to prove the undecidability, let us consider a floating-point representation of the parametrization of the variational form.
Indeed, the pulses that implement unitary gates in NISQ devices are stored digitally, and thence the gates cannot be parametrized
continuously, but only using a (possibly infinite) discrete set of values, generalizing floating-point arithmetic (IEEE 754) commonly utilized on classical computers. Formally:

\begin{prob}[Digitized {VQA} minimization problem]\label{p:VQAdigitized}\hfill
  \begin{description}
     \item[Instance] A discrete set of numbers $\mathbb{D}$ representable by a single word on a digital computer.
    An initial state $\ket{\Psi_0}\in \mathbb{C}^n$, a set of generators $\{H_i \}_{i \in \{1,\dots, L\}}\subset \Herm(\mathbb{C}^n)$, where $L$ is the number of layers and an observable $O\subset \Herm(\mathbb{C}^n)$. 
     \item[Optimization version] For $\ket{\Psi(\vec \phi)}\coloneqq U_L(\phi_L)\cdots U_1(\phi_1)\ket{\Psi_0}$ with $U_i(\phi_i)=\e^{-\i H_i \phi_i}$, 
    find $\vec \phi \in \mathbb{D}^L$ that minimizes $\OEx{\vec\phi}\coloneqq~\sandwich{\Psi(\vec \phi)}{O}{\Psi(\vec \phi)}$. 
     \item[Decision version with $a \in \mathbb{R}$]
    For $\ket{\Psi(\vec \phi)}\coloneqq U_L(\phi_L)\cdots U_1(\phi_1)\ket{\Psi_0}$ with $U_i(\phi_i)=\e^{-\i H_i \phi_i}$, 
decide if there exists $\vec \phi \in \mathbb{D}^L$ such that $\OEx{\vec\phi}\coloneqq~\sandwich{\Psi(\vec \phi)}{O}{\Psi(\vec \phi)} \le a$. 
\end{description}
\end{prob}

\begin{figure}[!htb]
    \centering
    \begin{equation}
    \begin{gathered}
D(58,4),D(38,8),D(32,12),D(29,16), \\
D(28,20), D(26,24), D(25,28), D(24,36), \\
D(21,96),D(19,2668), D( 14, 2 \times 10^{5}),  \\
D(13, 6.6 \times 10^{43}), D(12, 1.3 \times 10^{44}), \\ D(11, 4.6 \times 10^{44}),D( 10, 8.6 \times 10^{44}), \\ D(9, 1.6 \times 10^{45}).
\end{gathered}
    \end{equation}
    \caption{A sample of known Universal Diophantine Equations $D(n,d)$, which are polynomials of degree $d$ in $n$ variables. }
    \label{fig:all_UDE}
\end{figure}

In the following, we will present evidence that this is undecidable, using the arguments of Mati{\^a}sevi{\v{c}} \cite{Matiyasevich1993}, presented very nicely by Jones \cite{jones_1982}.
Recall that a set $W$ of integers (representing  strings) is recursively enumerable (r.e.) if there exists a recursive function that can eventually generate any element in $W$. Building on earlier results of Martin Davis, Hilary Putnam, and Julia Robinson, 
Mati{\^a}sevi{\v{c}} has shown
that every recursively enumerable set is Diophantine, that is, for every r.e. set $W$, there is a polynomial $D'_W$ such that for all integers $x$:  
\begin{align}
x \in W \Leftrightarrow \exists v, z_1, z_2, \ldots, z_v \textrm{ s.t. } D'_W(x, z_1, z_2, \ldots, z_v) = 0
\notag 
\end{align}
where $z_1, z_2, \ldots, z_v$ are positive integers and the coefficients of $D'_W$ are integers. 
In this tradition, $z_1, z_2, \ldots, z_v$ are known as variables, and $x, v$ are known as parameters. 
Thus, if one were able to decide the existence of the root of $D'_W$ in positive integers, one could decide the halting problem \cite[Chapter 4]{sipser13} for any Turing machine.

Furthermore, there exist universal integers $n, d$, such that for every r.e. set $W$, there exists its description $v$ such that:  
\begin{align}
x \in W \Leftrightarrow \exists z_1, z_2, \ldots, z_n \textrm{ s.t. } D''(x, v, z_1, z_2, \ldots, z_n) = 0
\notag 
\end{align}
where $D''$ is a polynomial of degree $d$. This polynomial is called a \emph{universal Diophantine equation}.
Here, $x, v$ would still be parameters. 
When deciding on the halting problem \cite[Chapter 4]{sipser13} on the input $x$ for a Turing machine described by a number of parameters, it is convenient to consider additional parameters, as suggested in Figure~\ref{fig:all_UDE}, where the parameters are $u, x, y, z$.

This sum-of-squares construction ensures that the optimization problem can be interpreted as minimizing a squared norm, which aligns naturally with the quantum mechanical objective \eqref{eq:classicalopt11}.

The construction of universal Diophantine equations (UDE) has been the subject of much subsequent work, especially when seeking  
UDE such that the number $n$ of 
variables and the degree $d$ of the polynomial are low.\footnote{
The construction of such polynomials may seem rather obscure, but it is nicely explained in \cite[Note 12.9]{wolfram2002new} including Mathematica code.
%https://www.wolframscience.com/nks/notes-12-9--universal-Diophantine-equation/
} 
This subsequent work \cite{Matiyasevich1993} has established the existence of a variety of universal Diophantine equations $D(n,d)$, as illustrated in Figure~\ref{fig:all_UDE}, cited from \cite{Matiyasevich1993}.
%has the universal property that solving any arbitrary Diophantine equation reduces to solving projections of $D(x,d)$.
In particular, it is known \cite{Jones1982JONUDE, Matiyasevich1993} that there is a universal Diophantine equation with 58 variables and degree 4, and that solving this $D(58,4)$ is equivalent to solving $D(86\times 10^{43},10)$,
which in turn is equivalent to solving the succinct\footnote{Notice that $D(58,4)$ features many monomials while $D(10,86\times 10^{43})$ features large coefficients.}
 $D(10,2 \times 5^{60})$ displayed in Figure~\ref{fig:all_UDE}.

\subsubsection*{A note on matrix exponentials}

We recall some fundamental properties of matrix exponentials which will be useful for the sequel. Write 
% https://web.mit.edu/2.151/www/Handouts/CayleyHamilton.pdf
% https://cgc.physics.miami.edu/Miami2015/VanKortryk.pdf 
\begin{equation}
    \exp{(kA)} = \e^{k A},
\end{equation}
where $A \in \mathbb{K}^{n\times n}$, where $\mathbb{K}= \mathbb{R}$ or $\mathbb{C}$ usually, and $k$ an element of some ring $R$, usually presented as the ring of real numbers $\mathbb{R}$.  We recall Sylvester's formula:
\begin{theorem}[Sylvester's formula \cite{Sylvester1883}from Caley-Hamilton theorem]\label{th:sylvester}
We can write
\begin{equation}\label{eq:matrixexp}
    \e^{kA} = \sum_{j=0}^{n-1}\varkappa_jA^j.
\end{equation}
where each $\varkappa_j\in \mathbb{K}$.
\end{theorem}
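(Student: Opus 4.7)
The plan is to derive the identity from the Cayley--Hamilton theorem, which gives a purely algebraic reduction of high powers of $A$ to a finite basis, and then pass to the limit in the exponential series.

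First, I would invoke Cayley--Hamilton applied to the characteristic polynomial $p_A(\lambda) = \det(\lambda I - A) = \lambda^n + \sum_{j=0}^{n-1} c_j \lambda^j$ of $A$, where $c_j \in \mathbb{K}$. This yields $A^n = -\sum_{j=0}^{n-1} c_j A^j$, so $A^n$ lies in the $\mathbb{K}$-linear subspace
\begin{equation}
V \coloneqq \operatorname{span}_{\mathbb{K}}\{I, A, A^2, \ldots, A^{n-1}\} \subseteq \mathbb{K}^{n\times n}.
\end{equation}
An easy induction on $m \ge 0$ then shows $A^m \in V$ for every $m$: if $A^m = \sum_{j=0}^{n-1} \alpha_j^{(m)} A^j$ with $\alpha_j^{(m)} \in \mathbb{K}$, then multiplying by $A$ and substituting the Cayley--Hamilton identity whenever $A^n$ appears produces an expression for $A^{m+1}$ of the same form. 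Hence the entire algebra generated by $A$ is contained in the finite-dimensional subspace $V$.

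Second, I would use this to analyse the partial sums of the exponential series. Writing
\begin{equation}
S_N(k) \;=\; \sum_{m=0}^{N} \frac{k^m}{m!}\, A^m \;=\; \sum_{j=0}^{n-1} \Biggl(\sum_{m=0}^{N} \frac{k^m}{m!}\, \alpha_j^{(m)}\Biggr) A^j,
\end{equation}
one sees that every $S_N(k)$ lies in $V$. Since $V$ is finite-dimensional it is closed in operator norm, and the standard convergence $S_N(k) \to \e^{kA}$ (absolute convergence of the matrix exponential series) forces $\e^{kA} \in V$. This gives the existence of scalars $\varkappa_0(k), \ldots, \varkappa_{n-1}(k) \in \mathbb{K}$ with $\e^{kA} = \sum_{j=0}^{n-1} \varkappa_j(k) A^j$, as claimed.

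The main subtlety I would flag is that the representation need not be unique: if the minimal polynomial of $A$ has degree strictly less than $n$, the family $\{I, A, \ldots, A^{n-1}\}$ is linearly dependent, and the $\varkappa_j$ are only determined modulo this dependence. The theorem asserts only existence, so this causes no issue. If one wishes to pin the $\varkappa_j(k)$ down explicitly---e.g.\ for later use in the correspondence with polynomial systems---one can diagonalise (or put into Jordan form) $A$ and match the scalar identities $\e^{k\lambda_i} = \sum_{j=0}^{n-1} \varkappa_j(k)\lambda_i^j$ on the eigenvalues, solving the resulting (confluent) Vandermonde system. The only real obstacle is this bookkeeping step of ensuring the coefficients stay in $\mathbb{K}$ rather than an algebraic extension; that is guaranteed because Cayley--Hamilton and the reduction above are carried out entirely over $\mathbb{K}$.
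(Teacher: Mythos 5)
Your proof is correct, and it takes a genuinely different (and more complete) route than the paper. The paper's argument works through the eigendecomposition: it assumes $A$ is diagonalizable (justified there because the relevant matrices are Hermitian), carries out the $n=2$ case explicitly by writing $\e^{A}=Q\operatorname{diag}(\e^{p},\e^{q})Q^{-1}$ and interpolating to get $\e^{A}=(q-p)^{-1}[(q\e^{p}-p\e^{q})\mathds{1}+(\e^{q}-\e^{p})A]$, and then asserts that this generalizes to higher $n$; this has the advantage of producing the explicit eigenvalue formulas for the $\varkappa_j$ (the Lagrange/Vandermonde data in Eqs.~\eqref{eq:exponentials}--\eqref{eq:vak}) that the encoding construction later relies on. Your argument instead proves existence unconditionally: Cayley--Hamilton confines every power $A^m$ to the $\mathbb{K}$-span $V$ of $\{I,A,\dots,A^{n-1}\}$, the partial sums of the exponential series therefore lie in $V$, and closedness of the finite-dimensional subspace forces the limit $\e^{kA}$ into $V$. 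This removes the diagonalizability hypothesis and the distinct-eigenvalue assumption implicit in the paper's $(q-p)^{-1}$ formula, and your remark that the $\varkappa_j$ are only determined modulo the minimal-polynomial dependence is a point the paper glosses over. The trade-off is that your existence proof does not by itself supply the explicit $\varkappa_j(x_1,\dots,x_n,k)$ that the paper needs downstream, though you correctly indicate that these are recovered by solving the (confluent) Vandermonde system on the spectrum, which is exactly what the paper does in Eq.~\eqref{eq:vak}.
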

\begin{proof}
This is quite easy to see: let $A \in \mathbb{R}^{2\times 2}$, i.e., $n=2$, without loss of generality and let $a,b \in \mathbb{R}$ be its eigenvalues.
Note that although the original form of the formula is for $k\in\mathbb{R}$ and $A\in\mathbb{R}^{n\times n}$ (for some $n$), the derivation of Sylvester's Theorem can be performed using the Cayley-Hamilton Theorem, which has a generic form appropriate to any commutative ring~\cite[Corollary 4.2.15]{gatto2016hasse}\footnote{This will be quite important in Section \ref{MainResult} when discussing such matrix exponentials. A commutative ring is simply a set together with an operation of addition and multiplication. For example, the set of natural numbers $\mathbb{Z}$ together with the standard multiplication and addition forms a commutative ring. Similarly, the set of $n\times n$ diagonal matrices ${\rm MatDiag}(\mathbb{C}^n)$ with values in complex numbers, together with the standard matrix addition and matrix multiplication, forms a commutative ring.}. See also~\cite{merzbacher1968matrix}. We assume for simplicity that 
$A$ is diagonalizable. In our setting, the matrices of interest (e.g., Hermitian Hamiltonians) are always diagonalizable, so this assumption holds.
\end{proof}
The eigendecomposition of a $A$ is $A = Q\left( \begin{smallmatrix} p  & 0 \\ 0 &q   \end{smallmatrix}\right)Q^{-1}$ for an invertible square matrix $Q$. Then, $\e^{A} = Q\left( \begin{smallmatrix} \e^p  & 0 \\ 0 &\e^q   \end{smallmatrix}\right)Q^{-1}$. Finally, observe that by finding such a $Q$ we arrive at
$\e^A = (q-p)^{-1}[(q\e^p-p\e^q)\mathds{1}+(\e^q-\e^p)A]$. This generalizes to a higher $n$.

In Eq. \eqref{eq:matrixexp} the coefficients in the finite series expansion are given as functions of the eigenvalues of $A$,
\begin{align}\label{eq:matrixexp2}
     \e^{k\lambda_i } &= \sum_{j=0}^{n-1} \varkappa_j \lambda_i^j \\
                      &= \varkappa_0+\varkappa_1 \lambda_i + \varkappa_2 \lambda_i^2 + \ldots + \varkappa_{n-1}\lambda_i^{n-1}
\end{align}

To understand this better let us consider a concrete example where $A \in \mathbb{C}^{2\times 2}$ given as
\begin{equation}
    A = \begin{pmatrix}
       a&b \\
       c&d \
        \end{pmatrix},
\end{equation}
and let $k \in \mathbb{R}$. The characteristic polynomial of $A$ is 
\begin{equation}
    p(x) = x^2- (a+d) x  + ad - bc,
\end{equation}
with eigenvalues
\begin{equation}
    \begin{aligned}
        x_{1,2} & =  \tfrac{1}{2} (a + d \mp \sqrt{a^2 + 4 b c - 2 a d + d^2}). 
    \end{aligned}
\end{equation}
Using Eq. \eqref{eq:matrixexp2} we can compute the coefficients $\{\varkappa_j\}$ appearing in Eq. \eqref{eq:matrixexp} using
\begin{equation}\label{eq:exponentials}
    \begin{aligned}
        \e^{kx_1} &=  \varkappa_0  + \varkappa_1 x_1 \\
        \e^{kx_2} &= \varkappa_0 +  \varkappa_1 x_2.
    \end{aligned}
\end{equation}
to obtain 
%\begin{align}\label{eq:varkappas}
%    \varkappa_1(x_1,x_2,k) &= {\e^{k x_1}}{x_1^{-1}} \\
%    \varkappa_2(x_1,x_2,k) &= {\e^{k x_2}}{x_2^{-1}}. \
%\end{align}
\begin{align}\label{eq:vak}
    \varkappa_0(x_1,x_2,k) &= \frac{ \e^{kx_2}x_1-\e^{kx_1}x_2}{x_1 - x_2} \\
    \varkappa_1(x_1,x_2,k) &= \frac{\e^{kx_1} - \e^{kx_2}}{x_1- x_2}.
\end{align}
%This generalizes to $\varkappa_j$ which, as a polynomial in $\mathbb{C}[x_1,x_2]$ is of degree $n-1-j$. 
From Eq. \eqref{eq:vak}, we can write $\varkappa_0,\varkappa_1$ as functions of the eigenvalues $x_1,x_2$ using Eq. \eqref{eq:matrixexp}. We obtain:
\begin{equation}\label{eq:matrixexp3}
    \e^{kA} = \varkappa_0(x_1,x_2,k) \mathds{1} + \varkappa_1(x_1,x_2,k) A.
\end{equation}

%Observe that for $k \in \i \mathbb{R}$ then, $\varkappa_j \in \mathbb{C}^*$, $\forall j$. Each coefficient $\varkappa_j$ is determined by the eigenvalues $\left\{\lambda_i\right\}$ of $A$, and depends on the scalar $k$ as well. Thus, they are not independent parameters but functions of the spectrum. This is important, since for a matrix $A \in \mathbb{K}^{n\times n}$ with $n$ degrees of freedom, solving an equation of the form
%\begin{align}
%    {\rm Tr} [A (\varkappa_j B)] =0,
%\end{align}
%with $B \in \mathbb{K}^{n\times n}$, does not induce further degrees of freedom.

Observe that when $k \in  \i \mathbb{R}$, each coefficient $\varkappa_j \in \mathbb{C}$. These coefficients are determined entirely by the eigenvalues $\left\{\lambda_i\right\}$ of $A$, and depend on the scalar $k$ as well. They are therefore not independent parameters, but functions of the spectrum of $A$.
This observation is important: for a matrix $A \in \mathbb{K}^{n \times n}$ with $n^2$ (real) degrees of freedom, solving an equation of the form

$$
\operatorname{Tr}\left[A \left(\varkappa_j B\right)\right]=0
$$
where $B \in \mathbb{K}^{n \times n}$, does not introduce new degrees of freedom-since $\varkappa_j$ is not a free variable but a fixed scalar determined by $A$ and $k$.

%%%%%%%%%%%%%%%%%%%%%%%%
%%%%%%%%%%%%%%%%%%%%%%%%
%%%%% SECTION 3 %%%%%%%%
%%%%% CONJECTURE %%%%%%%
%%%%%%%%%%%%%%%%%%%%%%%%
%%%%%%%%%%%%%%%%%%%%%%%%

\section{The Undecidability Conjecture}\label{MainResult}

We would like to encode the universal Diophantine equation $D(58, 4)$ into the digitized {VQA} minimization problem.
That is, we wish to take one universal Diophantine equation, $D(58, 4)$, which is a system of polynomial equations,
but which can be easily transformed into a single sum-of-squares polynomial $D(58, 8)$. This sum-of-squares version can be seen as:
\begin{equation}\label{eq:classicalopt10}
    \min\limits_{\vec{\phi}\in\mathbb{D}^{58}}\,  \sum_j q_j(\vec{\phi})^2,
\end{equation}
where $q_j$ denotes one summand of the sum-of-squares form of the UDE $D(58, 8)$. This \textit{sum-of-squares} construction ensures that the optimization problem can be interpreted as minimizing a squared norm, which aligns naturally with the quantum mechanical objective~\eqref{eq:classicalopt11}.

Now, let us consider the $L=58$ and $n=5$ case of Problem~\ref{p:VQA}:
\begin{align}\label{eq:classicalopt11}
    \min_{\vec{\phi}\in\mathbb{D}^{58}} \, \left\| O \prod_{i=0}^{57} U_{58-i}(\phi_{58-i})\ket{\Psi_0} \right\|^2
\end{align}
and let us show the existence of the Hermitian operator $O \in \Herm(\mathbb{C}^n)$, as well as the state $\ket{\Psi_0} \in \mathcal{H}$, and unitaries $ U_{58}(\phi_{58}), \ldots, U_1(\phi_1) \in U(n)$ such that 
the classically-solved optimization problem \eqref{eq:classicalopt11} 
in the discrete set $\mathbb{D}^{58}$ of numbers representable by 58 words in a digital computer,
attains objective value zero if and only if there is a zero of the UDE $D(58, 4)$ in $\mathbb{D}^{58}$. In other words, proving Theorem \ref{thm:digivqeundecidable} amounts to finding a map \eqref{eq:classicalopt11} $\to$ \eqref{eq:classicalopt10} and the existence of such an encoding map boils down to the nontrivial task of proving the existence of a common zero of a system of complex polynomials. The challenge lies in ensuring that the variational quantum circuit evaluates to zero if and only if the UDE has a solution—a condition that hinges on the ability to construct appropriate Hamiltonians, a state, and an observable to match the algebraic structure of the UDE.

The conversion involves several \emph{layers of variables} that we qualitatively describe now, and further describe in Table \ref{Table:Variables} in the Supplementary material: 

\begin{itemize}
    \item Within any UDE, there are parameters that encode the Turing machine.\footnote{In the example of Figure \ref{fig:UDE}, this would be $u, v, x, z$.}
    \item Within the same UDE, there will be a vector of variables restricted to integers; \footnote{
    In the example of Figure \ref{fig:UDE}, these would be $a, b, c, \ldots, s, t, w, \alpha, \gamma, \nu, \theta, \lambda, \tau, \phi$. Notice, however, that we override the meaning of some of these symbols in this section.} in $D(58, 4)$, this is a 58-vector, which we denote $\vec \phi$. 
    %\item we can recast the same UDE as a matrix-valued polynomial $\mathbb{P}_{58}$ in a the matrix-valued variable $\Phi$.
    \item Ultimately, we can decide on the state vector $\ket{\Psi_0}$, and the operators $O, U_{58}(\phi_{58}), \ldots, U_1(\phi_1)$. In this system, $\ket{\Psi_0}$ will enjoy some freedom, which will be discussed later. Note that the dimensions of these objects are restricted by the dimension of the UDE\footnote{By Sylvester's formula.}.
    \item However, the existence of an encoding map is determined by the existence of a non-empty zero set $V(I)$ of simultaneous zeros of a certain system of complex polynomials closely associated with the probability amplitude of $O$.
\end{itemize}

First, let us recall that it is sufficient to consider only  universal Diophantine equations that are sum-of-squares polynomials, which is compatible with Prob. \eqref{eq:classicalopt10}. Fortunately, there are a number of those, as exemplified in Figure \ref{fig:UDE}.
The sum-of-squares nature of many universal Diophantine equations is due to the fact that those are usually devised as systems of polynomial equations, and only at the end one obtains a single polynomial by the sum-of-squares construction.

\noindent
\emph{Assumptions}. Without loss of generality of the proof, we will assume that $[H_i,H_j]=0$ for all $H_i,H_j \in \Herm(\mathbb{C}^n)$. This assumption will allow us to efficiently make use of Sylvester's formula \eqref{eq:matrixexp}. Furthermore, before we proceed, we need to note that when taking into account Sylverster's formula we directly get a cut-off on the rank of the generators $\{H_i \}_{i=1}^{58}$ which now are elements of $\Herm(\mathbb{C}^5)$. Similarly, this enforces $\ket{\Psi_0} \in \mathbb{CP}^5$. 

With $U_i(\phi_i)=e^{-\i\phi_iH_i}$ as in Eq. \eqref{Eq:1}, we have
\begin{align}\label{eq:expandus}
 \prod_{i=0}^{57} U_{58-i}(\phi_{58-i}) & = \prod_{i=0}^{57} \e^{-\i\phi_{58-i}H_{58-i}} \\ 
 &  = \e^{-\i\sum\limits_{i=1}^{58}\phi_{i}H_{i}} \\ \label{eq:expandus3}
 & = \sum\limits_{j=0}^4 \varkappa_j \left(\sum\limits_{i=1}^{58}\phi_{i}H_{i}\right)^j \\  \label{eq:cb}
 & = \vec c\cdot\vec b(\vec\phi) = \sum_{a=1}^m c_a b_a(\vec \phi).
\end{align}
Here, we applied Sylvester's Theorem~\ref{th:sylvester} to obtain the third line, Eq. \eqref{eq:expandus3}, with the expansion up to $4$ since $H_i\in\Herm(\mathbb{C}^{5})$. The index $a$ is a multi-index that labels the elements of $\vec c$ and $\vec b$ according to the degree of the monomial in the elements of the variational parameter vector $\vec \phi$, see for example Eq. \eqref{example} below. The monomial basis vector $\vec{b}(\vec{\phi})$ encodes the possible set of monomial terms in the VQA (i.e., $\phi_1^2\phi_{36}$, etc.), and has dimension $m = \tfrac{62!}{58!4!} \sim \mathcal{O}(10^5)$ (generically $\tfrac{(L+n-1)!}{L!(n-1)!}$). Each component of the $m$-dimensional vector $\vec c$ is itself an element of $\Herm(\mathbb{C}^{5})$ and contains the coefficients that multiply each monomial term of $\vec{b}(\vec \phi)$, a product of the corresponding complex number $\varkappa_j$ with a certain monomial in the generators $\{H_i\}_{i=1}^{58}$. Note that the set of complex numbers $\{ \varkappa_j\}_{j=0}^4$ is not independent. They can be described as functions of the variational parameters and the eigenvalues of the Hamiltonians by solving for them the system
\begin{equation}
    \begin{aligned}
    \e^{-\i x_1} &= \sum_{l=0}^{4}\varkappa_l x_1^l \\
    \e^{-\i x_2} &= \sum_{l=0}^{4}\varkappa_l x_2^l \\
    & \vdots \\
    \e^{-\i x_5} &= \sum_{l=0}^{4}\varkappa_l x_5^l, \
\end{aligned}
\end{equation}
as in Eq. \eqref{eq:exponentials}, where 
\begin{align}
    x_i = \sum_{j=1}^{58} (\phi_jH_{j})_{ii},
\end{align}
that is, each $x_i$ is given as the sum of the eigenvalues of the $i$-th row of all Hamiltonians.

In order to clearly understand the structure of the vectors $\vec c$ and $\vec{b}(\vec \phi)$ we provide a simple example in the Supplementary Material, page \pageref{sec:example2}.

With this new description of $\prod_{i=0}^{57} U_{58-i}(\phi_{58-i})$ as the product $\vec c\cdot\vec b(\vec\phi)$ we can insert the latter into the VQA objective~\eqref{eq:classicalopt11}, with the aim of reaching \eqref{eq:classicalopt10}, and seek to enforce the corresponding equation,
\begin{equation}
   \begin{aligned}
    \left\| O \vec c\cdot\vec b(\vec\phi) \ket{\Psi_0}\right\|^2 &=  \sum\limits_j q_j(\vec\phi)^2 \\
    & = \sum_{j} c^U_j b^U_j(\vec{\phi}) \\
    & = \vec c^U \vec b^U(\vec\phi),
    \end{aligned} 
\end{equation}
where $\vec c^U\in\mathbb{Z}^{m^2}$ and $\vec b^U$ are the coefficients and corresponding monomial basis vectors of the sum-of-squares form of the UDE, i.e., the decomposition of Eq. \eqref{eq:classicalopt10} as $\vec{c}^U \vec{b}^U$. This equation is equivalent to a set of equations on the coefficients, i.e., 
\begin{align}\label{eq:newsystem}
    \braket{\Psi_0 |  (O\vec c)^\dagger O\vec c | \Psi_0 }_j =  \vec c^U_j
\end{align}
where $j$ is selecting the corresponding monomial term of $(O\vec c)^\dagger O\vec c$ of $\vec b^U$. In other words, the subscript $j$ denotes projection onto the monomial basis term $b_j^U(\vec{\phi})$. That is, we extract the coefficient of the $j$-the monomial in the expansion of the squared amplitude.

Let $\vec{x}$ denote the vector of entries of the operator $O$ and let $\vec{w}$ denote the entries of $\ket{\Psi_0}$. The system of equations \eqref{eq:newsystem} corresponds to $2m$ equations $\{ g_i(\vec x,\vec w) = 0 \}_{i=1}^{m^2}$ to solve for. 

\begin{conj}[Solution of a Highly Structured System]\label{conj:system}
    For the set of coefficients $\vec c^U_j$ associated with the Diophantine polynomial system with 58 variables, there exists at least one pair of $O$ and $\ket{\Psi_0}$ such that there exists at least one solution $\vec c^U$ to Sys. \eqref{eq:newsystem}.
\end{conj}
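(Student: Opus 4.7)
The plan is to combine a parameter-counting argument with generic solvability of polynomial systems over $\mathbb{C}$. First, I would enumerate the continuous degrees of freedom on the left-hand side of \eqref{eq:newsystem}: the Hermitian observable $O \in \Herm(\mathbb{C}^5)$ contributes $25$ real parameters, the state $\ket{\Psi_0} \in \mathbb{CP}^5$ contributes $10$ after removing norm and global phase, and the $58$ jointly commuting Hamiltonians $\{H_i\}_{i=1}^{58} \subset \Herm(\mathbb{C}^5)$ contribute $5\cdot 58 = 290$ real eigenvalue parameters plus a single shared unitary basis of dimension $25$. Against this, the system \eqref{eq:newsystem} imposes $m^2$ scalar constraints with $m = \binom{62}{4}$; however, these equations carry heavy algebraic redundancy inherited from Sylvester's formula~\eqref{eq:matrixexp}, since $\vec c$ and the $\varkappa_j$ are polynomial functions of a much smaller set of eigenvalue data, so the effective rank of the constraint system should be far below $m^2$.

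Second, I would recast the problem in the complex-algebraic category. After fixing a common eigenbasis for the $\{H_i\}$ (guaranteed by the commuting assumption), I would treat the entries of $O$, of $\ket{\Psi_0}$, and the complex eigenvalues $\lambda_i^{(k)}$ as unknowns $\vec z \in \mathbb{C}^N$, and clear the exponentials $\e^{-\i x_i}$ by introducing auxiliary variables $\zeta_i$ subject to the polynomial relations implicit in Sylvester's formula, so that \eqref{eq:newsystem} becomes a genuine polynomial system $g_i(\vec z, \vec \zeta) = 0$. I would then invoke Hilbert's Nullstellensatz: the variety $V(I)$ cut out by the ideal $I = (g_1,\dots,g_{m^2})$ is nonempty if and only if $1 \notin I$. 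The core step is to prove that the evaluation map $\Phi$ sending parameters to the coefficient vector of $(O\vec c)^\dagger O \vec c$ in the monomial basis $\vec b^U$ is \emph{dominant} onto $\mathbb{C}^{m^2}$, which can be reduced to checking that the Jacobian of $\Phi$ has full row rank at a convenient smooth base point (e.g., with the $H_i$ having distinct, generic eigenvalues and $\ket{\Psi_0}$ in general position). Dominance would imply that $V(I)$ is nonempty for \emph{every} target, including the fixed UDE coefficient vector $\vec c^U$ from $D(58,4)$.

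The main obstacle—and the reason the statement is left as a conjecture—is that $\Phi$ is decidedly non-generic in two interlocking ways. First, the quantity $(O\vec c)^\dagger O \vec c$ is rank-one in $\ket{\Psi_0}$, so its image is confined to a thin subvariety of the ambient $\mathbb{C}^{m^2}$ of coefficient vectors. Second, the components of $\vec c$ are not free: they are entangled through the shared scalars $\varkappa_j(x_1,\dots,x_5)$ defined implicitly by the spectral system of Sylvester-type equations, which severely correlates the monomials across different $\vec\phi$-degrees. Consequently, one cannot merely count parameters and equations; one must verify that the specific integer vector $\vec c^U \in \mathbb{Z}^{m^2}$ prescribed by $D(58,4)$ actually lies in the image of $\Phi$, rather than only in its Zariski closure. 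A rigorous proof would therefore likely require either (i) an explicit algebraic construction of a witness triple $(O,\ket{\Psi_0},\{H_i\})$ exploiting the particular block-structure of known universal Diophantine equations, or (ii) a global surjectivity/transversality theorem for sandwich maps of the form $\ket{\psi}\mapsto\bra{\psi}M(\vec z)^\dagger M(\vec z)\ket{\psi}$ onto sufficiently large coefficient strata. Both routes are substantial, which is why I would expect the conjecture to be resolvable only by importing machinery from real algebraic geometry and the theory of moment maps rather than by any elementary degree-of-freedom calculation alone.
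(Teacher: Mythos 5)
You should first note that the paper does not prove this statement at all: it is stated as Conjecture~\ref{conj:system} and explicitly left open (``While we do not proceed to prove that this system indeed contains at least one solution, there are various reasons for this to be true''). The paper's entire support for it consists of the same ingredients you invoke --- a degrees-of-freedom count (Table~\ref{table:dof}), an appeal to Hilbert's Nullstellensatz as the criterion for emptiness of $V(I)$, the generic solvability of structured complex polynomial systems, and a toy $2$-variable example with a SAGE script. In that sense your proposal takes essentially the same approach as the paper's informal justification, and your diagnosis of why it falls short is in fact sharper than anything in the paper: you correctly identify that the real issue is whether the evaluation map $\Phi$ sending $(O,\ket{\Psi_0},\{H_i\})$ to the coefficient vector of $(O\vec c)^\dagger O\vec c$ is dominant (or better, surjective onto the relevant integer point $\vec c^U$), and that the rank-one structure in $\ket{\Psi_0}$ together with the correlations among the components of $\vec c$ through the shared Sylvester scalars $\varkappa_j$ of Eq.~\eqref{eq:matrixexp} confine the image to a thin subvariety, so that naive parameter counting cannot settle membership. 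Your distinction between lying in the image of $\Phi$ versus only in its Zariski closure is exactly the point the paper glosses over.

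Two caveats. First, your proposal is a research program, not a proof, and you are right to present it as such --- but be explicit that neither route (i) nor route (ii) is carried out, so the statement remains unestablished; do not let the Nullstellensatz framing suggest otherwise, since showing $1\notin I$ for this specific ideal is precisely the open problem. Second, a bookkeeping discrepancy: the paper assigns $\ket{\Psi_0}$ only $2d-2=8$ real degrees of freedom (the dimension of $\mathbb{CP}^{4}$ for a $5$-dimensional Hilbert space, despite the paper's own notation $\mathbb{CP}^5$), whereas you count $10$; and the paper is itself inconsistent about whether the system \eqref{eq:newsystem} comprises $2m$ or $m^2$ equations. Since the whole plausibility argument rests on comparing free parameters against effective constraints, these counts need to be pinned down before the heuristic carries any weight.
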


Essentially, system \eqref{eq:newsystem} corresponds to finding the zero-set $V(I)$ of a system of complex polynomial equations $\{ g_i = 0 \}_{i=1}^{2m}$ where $I$ is the radical formed out of them. 
%Generically, for such a system, solutions do exist \cite{Garcia1980,Kamat1986}.
By Hilbert's Nullstellensatz \cite{Cox2015}, a solution of a system of polynomials $\{ g_1=0, \ldots, g_{2m} = 0\}$ with complex coefficients in $l$ variables has no solution in $\mathbb{C}^l$ if and only if one can find a set of Bézout polynomials $\{g'_1= 0, \ldots, g'_k = 0 \}$ such that 
\begin{equation}
\label{eq:Hilbert}
    \sum_{i=1}^k g_i g'_i = 1,
\end{equation}
i.e., we require that the set $\{g_1,\ldots,g_k\}$ forms a proper ideal in $\mathbb{C}[x_1,\ldots, x_l]$.
Using the technique of Sombra \cite{sombra1999sparse}, an upper bound on the total degree of the $g'_i$ can be computed by solving an exponentially large linear system.
Alternatively, explicit solutions could be obtained by \cite{duchi2019solving}.

\begin{theorem}[Undecidability of VQAs]
\label{thm:digivqeundecidable}
Assuming Conjecture \ref{conj:system} (``Solution of a Highly Structured System'') holds, the decision version of digitized {VQA} minimization (Problem~\ref{p:VQAdigitized}) is undecidable for $L=58$ layers. 
\end{theorem}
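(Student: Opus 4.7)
The plan is to reduce the (undecidable) problem of deciding whether the universal Diophantine equation $D(58,4)$ admits an integer root to the decision version of Problem~\ref{p:VQAdigitized} at layer count $L=58$, dimension $n=5$, and threshold $a=0$. Taking $\mathbb{D}$ to be any digital encoding faithfully containing $\mathbb{Z}$, a root in $\mathbb{D}^{58}$ coincides with an integer root, so Matiyasevich's theorem will immediately yield undecidability once such a reduction is in hand.

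First, I would replace $D(58,4)$ by its sum-of-squares reformulation $D(58,8)=\sum_j q_j(\vec\phi)^2$ as in~\eqref{eq:classicalopt10}, which is nonnegative on $\mathbb{D}^{58}$ and vanishes precisely at the integer zeros of $D(58,4)$. On the quantum side I would choose commuting Hermitian generators $\{H_i\}_{i=1}^{58}\subset \Herm(\mathbb{C}^5)$; Sylvester's formula (Theorem~\ref{th:sylvester}) then rewrites $\prod_{i=0}^{57}U_{58-i}(\phi_{58-i}) = \e^{-\i\sum_i \phi_i H_i}$ as the polynomial $\vec c\cdot\vec b(\vec\phi)$ of degree at most $4$ in $\vec\phi$ exhibited in~\eqref{eq:cb}. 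Consequently the VQA objective~\eqref{eq:classicalopt11} becomes a polynomial in $\vec\phi$ of degree at most $8$, whose coefficients are polynomial functions of the entries $\vec x$ of $O$ and $\vec w$ of $\ket{\Psi_0}$ (with the additional spectral dependence of the $\varkappa_j$ absorbed into $\vec c$).

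Next, I would invoke Conjecture~\ref{conj:system} to obtain a pair $(O,\ket{\Psi_0})$ for which every coefficient of $\|O\vec c\cdot\vec b(\vec\phi)\ket{\Psi_0}\|^2$ equals the corresponding coefficient $\vec c^U_j$ of $D(58,8)$, i.e.\ the matching system~\eqref{eq:newsystem} is simultaneously satisfied. Since two polynomials with identical coefficients agree everywhere, fixing this realization produces the identity
\begin{equation*}
    \Bigl\|O\prod_{i=0}^{57} U_{58-i}(\phi_{58-i})\ket{\Psi_0}\Bigr\|^2 \;=\; \sum_j q_j(\vec\phi)^2 \qquad \text{for every } \vec\phi\in\mathbb{D}^{58}.
\end{equation*}
The left-hand side is therefore nonnegative and vanishes at $\vec\phi$ if and only if $\vec\phi$ is a common zero of the $q_j$, equivalently a root of $D(58,4)$. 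Thus the digitized VQA instance has answer ``yes'' at threshold $a=0$ exactly when $D(58,4)$ admits an integer root, finishing the many-one reduction from Hilbert's tenth problem.

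The step I expect to be the main obstacle---and the reason the theorem is only conditional---is precisely the invocation of Conjecture~\ref{conj:system}: the matching system~\eqref{eq:newsystem} contains on the order of $m^2\sim 10^{10}$ polynomial constraints in the entries of $O$ and $\ket{\Psi_0}$, and an unconditional existence proof seems to require either an explicit Nullstellensatz certificate in the sense of~\eqref{eq:Hilbert} or a dimension-theoretic genericity argument beyond what is currently available. A secondary subtlety is that the Sylvester coefficients $\varkappa_j$ appearing in~\eqref{eq:cb} are not free parameters but fixed spectral functions of $\sum_i \phi_i H_i$, so the matching must be organized coefficient-by-coefficient in the monomial basis $\vec b(\vec\phi)$ rather than as a single operator equation; once this bookkeeping is handled and the commuting-Hamiltonian assumption is used to legitimize the single-exponential form of the product, the polynomial identity above and hence the reduction proceed as sketched.
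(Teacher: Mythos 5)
Your proposal is correct and follows essentially the same route as the paper: reformulating $D(58,4)$ as the sum-of-squares $D(58,8)$, using commuting generators and Sylvester's formula to write the product of unitaries as $\vec c\cdot\vec b(\vec\phi)$, invoking Conjecture~\ref{conj:system} to solve the coefficient-matching system~\eqref{eq:newsystem}, and concluding via the resulting polynomial identity that the threshold-$a=0$ decision problem is equivalent to Hilbert's tenth problem. Your closing remarks on the non-freeness of the $\varkappa_j$ and on the conjecture being the sole unconditional gap accurately reflect where the paper itself locates the difficulty.
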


Theorem \ref{thm:digivqeundecidable} then implies that there is no finite-time algorithm to solve the problem, that is,

\begin{coro}[Uncomputability of VQAs]\label{cor:VQE}
Assuming Conjecture \ref{conj:system} (``Solution of a Highly Structured System'') holds, there exists no recursive function to decide digitized {VQA} minimization in the decision version (Problem~\ref{p:VQAdigitized}) for $L=58$ layers.
Consequently, the optimization version is uncomputable for $L=58$ layers.
\end{coro}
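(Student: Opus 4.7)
The plan is to prove Theorem~\ref{thm:digivqeundecidable}; Corollary~\ref{cor:VQE} then follows essentially by definition. The target is a \emph{computable} reduction from the solvability of the universal Diophantine equation $D(58,4)$ over a discrete set $\mathbb{D}\supseteq\mathbb{Z}_{>0}$ to the decision version of Problem~\ref{p:VQAdigitized} with $L=58$, $n=5$, and threshold $a=0$. Since Matiyasevich's theorem makes the Diophantine side equivalent to the halting problem, such a reduction transfers undecidability. As a first step, I would replace $D(58,4)$ by the single sum-of-squares polynomial $\sum_j q_j(\vec{\phi})^2$ of degree $8$ as in Eq.~\eqref{eq:classicalopt10}: a sum of real squares vanishes iff each summand vanishes, so the two decision problems are equivalent over $\mathbb{D}^{58}$.

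Second, under the commuting-Hamiltonian assumption and working in $\Herm(\mathbb{C}^{5})$, I would unfold the VQA objective using Sylvester's formula (Theorem~\ref{th:sylvester}). The product $\prod_{i=0}^{57}U_{58-i}(\phi_{58-i})$ collapses to $\e^{-\i\sum_i\phi_i H_i}$ and then truncates to $\sum_{j=0}^{4}\varkappa_j\bigl(\sum_i\phi_i H_i\bigr)^{j}$, yielding the expansion $\vec{c}\cdot\vec{b}(\vec{\phi})$ of Eq.~\eqref{eq:cb}. Consequently $\|O\,\vec{c}\cdot\vec{b}(\vec{\phi})\ket{\Psi_0}\|^{2}$ is a polynomial of total degree $8$ in $\vec{\phi}$ whose coefficients are bilinear expressions in $\ket{\Psi_0}$ and linear in the entries of $O$ (with $\{H_i\}$ and the scalars $\varkappa_j$ also regarded as unknowns to be chosen). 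Matching these coefficients against the integer coefficients $\vec{c}^{U}$ of the sum-of-squares UDE produces the complex polynomial system~\eqref{eq:newsystem}. Conjecture~\ref{conj:system} supplies a common solution $(O,\ket{\Psi_0})$; fixing it together with a canonical choice of commuting Hermitian generators $\{H_i\}$ yields a VQA instance whose objective on $\mathbb{D}^{58}$ agrees pointwise with $\sum_j q_j(\vec{\phi})^{2}$. Hence deciding whether the VQA objective reaches $0$ on $\mathbb{D}^{58}$ is equivalent to deciding whether $D(58,4)$ has a root in $\mathbb{D}^{58}$, which is undecidable.

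Corollary~\ref{cor:VQE} follows at once: undecidability is by definition the non-existence of a recursive decider, and any recursive function computing the minimum of the VQA objective over $\mathbb{D}^{58}$---or even an approximation that resolves ``equal to $0$'' from ``strictly positive,'' which is possible in the discrete setting because the positive values of the sum-of-squares polynomial at each fixed $\vec{\phi}\in\mathbb{D}^{58}$ are bounded away from $0$---would yield a decider, contradicting the Theorem.

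The principal obstacle is Conjecture~\ref{conj:system}: without it, the matching step has no witness and the reduction breaks. The heuristic support sketched in the excerpt (degree-counting and the generic solvability of structured complex polynomial systems via the Nullstellensatz~\eqref{eq:Hilbert}) is plausible but not a proof, and the system~\eqref{eq:newsystem} is highly non-generic since the right-hand side is dictated by the fixed integer coefficients of a particular UDE. A secondary concern is \emph{effectiveness}: to obtain undecidability rather than a mere existence-of-reduction statement, the map $\vec{c}^{U}\mapsto(O,\ket{\Psi_0},\{H_i\})$ must be computable; Sombra's effective Nullstellensatz bounds~\cite{sombra1999sparse} allow the required witness to be extracted from a finite algebraic computation, which suffices for a recursive (if impractical) reduction. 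Verifying both of these carefully would be the last, and hardest, step of the proof.
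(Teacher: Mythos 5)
Your proposal follows the paper's own route: prove Theorem~\ref{thm:digivqeundecidable} by encoding the sum-of-squares form of $D(58,4)$ into the VQA objective via Sylvester's formula and Conjecture~\ref{conj:system}, and then observe that the corollary is immediate because a recursive function solving the optimization version over $\mathbb{D}^{58}$ would decide the threshold-$0$ question. Your two added remarks---that positive values of the integer-coefficient sum of squares are bounded away from $0$ on the discrete domain, and that the witness $(O,\ket{\Psi_0},\{H_i\})$ must be produced computably for the reduction to transfer undecidability---are points the paper leaves implicit, and they are correct refinements rather than a different argument.
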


This result is conditional on the validity of the Conjecture \ref{conj:system}. While we do not proceed to prove that this system indeed contains at least one solution, there are various reasons for this to be true. (See above.) 
%Furthermore, in the upcoming work \cite{korpas2023advantage} we give an alternative proof, based on computational-complexity theoretic arguments, enhancing the validity of the conjecture. 
Note that there are a number of Diophantine equations that we can consider, with increasing degree and variable dimension. From the standpoint of the degrees of freedom arithmetic described in the following, however, it intuitively appears to be the most agreeable equation to the formulated conjecture, and so we formulate it as such. Of course, the analogous conjecture for any polynomial Diophantine system would be similarly informative in regard to VQA.

\emph{Degrees of freedom}. For each Hermitian matrix $H \in \Herm(\mathbb{C}^5)$, we have $5$ real degrees of freedom from the diagonals (which are real) and $2\times 10$ real degrees of freedom ($2\times 5$ complex numbers) from the off-diagonals, summing to 25 real degrees of freedom. Any operator $O \in \Herm(\mathbb{C}^5)$ must be described by the $5^2-1$ generators $\{\sigma_i\}_{i=1}^{25}$ of the corresponding Lie algebra as $O = \sum_{i=1}^{25}\alpha_i\sigma_i$ matching the number of parameters.  The degrees of freedom of $\vec c$ is equivalent to the number of eigenvalues of the diagonal generators $\{H_i \}_{i=1}^{58}$, that is $58 \times 5 = 290$. Once a combination is chosen, all elements of $\vec c $ are fixed. Finally, naively one might think that $\ket{\Psi_0} \in \mathbb{C}^5$ but given the trace condition a complex degree of freedom is removed, while the same holds for a global phase, thus one is left with 8 real degrees of freedom. This matches the dimension of $\mathbb{CP}^5$ where the set of states such as $\ket{\Psi_0}$ belongs to. We summarize the degrees of freedom for our system in Table \ref{table:dof}.
\vspace{1em}

\begin{table}[t!]
\begin{tabular}{llc} \toprule[2pt]
\multirow{2}{*}{Object} & \multirow{2}{*}{Space} & \multirow{2}{*}{\begin{tabular}[c]{@{}l@{}}Degrees of \\ freedom (real)\end{tabular}} \\
        &                        &  \\ \hline
$\ket{\Psi_0}$  & $Q_5 \coloneqq \mathbb{CP}^5$         & 8\\
$O$     &  $\Herm(\mathbb{C}^5)$         &  25 \\
$H_i$     &  $\Herm(\mathbb{C}^5)$         &  5 \\
$\{H_i\}$     &  $\Herm(\mathbb{C}^5)$         &  $5\times 58=290$ \\
%$c_{(i,j)}$    &    $\Herm(\mathbb{C}^5)$  &   \\
\hline
\end{tabular}
\caption{The space of (pure) states of 5-dits, denoted as $Q_5$ is a proper subset of the $d^{2}-1$, $d=5$, dimensional Hilbert-Schmidt sphere (ball, for mixed states). The corresponding Bloch manifold for $\{ \ket{\Psi_0} \}_{\rm pure} \cup \{ \ket{\Psi_0} \}_{\rm mixed}$ is ${\rm SU}(5)$. For $\{ \ket{\Psi_0} \}_{\rm pure}$ the number of degrees of freedom is $2d-2$, the real dimension of $Q_5$ \cite[Section 8]{bengtsson2006}.
Finally, for $c_{(i,j)} \in \Herm(\mathbb{C}^5)$, we do not obtain new degrees of freedom on top of the degrees of freedom in $\vec{\phi}$ and $\{H_i\}$.
}
\label{table:dof}
\end{table}

\paragraph*{An illustration of encoding a sum-of-squares polynomial into small VQA}\label{Encoding}
As an \emph{illustration} on how to encode a universal Diophantine equation polynomial into VQA let us consider a simple example of how a generic second-degree sum-of-squares polynomial can be written in the form~\eqref{eq:classicalopt1}. Consider the sum-of-squares polynomial 
\begin{align}\label{eq:examplepoly}
    p(x,y)=\eta_x x^2 + \eta_y y^2+\eta_{xy} x^2 y^2,
\end{align}
with coefficients $\eta_x,\eta_y,\eta_{xy} \in \mathbb{Z}, \mathbb{R}$ or $\mathbb{C}$ that are given and fixed (i.e., not free). The individual terms $\{q_j\}$ of the UDE have as their highest total power 2. Thus, we consider that the state dimension is $3$ in order to generate the required powers in the use of Sylvester's formula. 

We wish to construct the corresponding:  
\begin{align}\label{eq:example}
    \left\|  OU_x(x)U_y(y) |\Psi_0 \rangle \right\|^2.
\end{align} 
Let $H_x = X, H_y = Y$. Recall that
\begin{align} \label{eq:commute}
    U_x(x)U_y(y) & =\e^{-\i X} \e^{-\i Y}=\e^{-\i(X+Y)}
    \\ & = \sum\limits_{j=0}^{2} \varkappa_j (xX+yY)^j
    \\ \nonumber & =  c(0,0)+c(0,1)x+ c(1,0) y  \\ &\quad +c(1,1) xy + c(2,0)x^2 +c(0,2) y^2 \\
    & = \vec c \cdot \vec b(x,y)
\end{align}
where we have applied Sylvester's Theorem~\ref{th:sylvester} at the second equality, noting that $X,Y\in\Herm(\mathbb{C}^3)$ and thus there are terms up to degree two in the polynomial appearing in its application. Here we require that $[X,Y]=0$ in order to proceed to the second equality of Eq. \eqref{eq:commute}. Explicitly, we have
\begin{align*}
    c(0,0) & = \varkappa_0\mathds{1} \\
    c(1,0) & = \varkappa_1 X \\
    c(0,1) & = \varkappa_1 Y \\
    c(2,0) & = \varkappa_2 X^2 \\ 
    c(0,2) &= \varkappa_2 Y^2  \\
    c(1,1) &= 2\varkappa_2 XY. \\
\end{align*}
Note the dot product of the monomial vector $\vec b(x,y)=\begin{pmatrix} 1 & x & y & xy & x^2 & y^2 \end{pmatrix}^T$ with the vector of coefficients $\vec c$. Applying $O$ from the left, that is, in $OU_x(x)U_y(y)$ we obtain the modified polynomial,

\begin{equation}
  \begin{aligned}
    O U_x(x)U_y(y) & =   Oc(2,0)x^2  +Oc(0,2) y^2  \\ &\quad + Oc(1,1) xy +Oc(0,0)  \\
                   & \quad +Oc(1,0)x+ Oc(0,1) y. \
\end{aligned}  
\end{equation}
Thus, when we substitute this expression into the VQA objective we obtain,
\begin{align}\label{eq:almost}
    &\left\|  OU_x(x)U_y(y) |\Psi_0 \rangle \right\|^2 = \left\|O (\vec c \cdot \vec{b}(x,y))|\Psi_0 \right\|^2 \\
    &= \langle \Psi_0|\sum_{\substack{0\le j^1_1+j^2_1\le 2\\0\le j^1_2+j^2_2\le 2}} \tilde{O}_{j^1_1,j^2_1}^\dagger \tilde{O}_{j^1_2,j^2_2} x^{j^1_1+j^1_2}y^{j^2_1+j^2_2} |\Psi_0 \rangle
\end{align}
where $\tilde{O}_{j^1,j^2}=Oc(j^1,j^2) \in \Herm(\mathbb{C^5})$, since, as we have seen, each $c(i,j)$ corresponds to a certain monomial in Hamiltonians $\{H_i\}$. Expanding yields

\begin{equation}\label{eq:system}
    \begin{aligned}
    0 &= \braket{\Psi_0|\tilde{O}_{0,0}^\dagger \tilde{O}_{0,0}|\Psi_0} \\
    0 &= \braket{\Psi_0|\tilde{O}_{1,0}^\dagger \tilde{O}_{0,0}+\tilde{O}_{0,0}^\dagger \mathcal{O}_{1,0}|\Psi_0} (= 0\cdot y)\\
    0 &= \braket{\Psi_0|\tilde{O}_{0,1}^\dagger \tilde{O}_{0,0}+\tilde{O}_{0,0}^\dagger \tilde{O}_{0,1}|\Psi_0} (= 0 \cdot x)\\
    \eta_{xy} &= \braket{\Psi_0|\tilde{O}_{0,0}^\dagger \tilde{O}_{1,1}+\tilde{O}_{0,1}^\dagger \tilde{O}_{1,0}+\tilde{O}_{1,0}^\dagger \tilde{O}_{0,1}|\Psi_0} \\
    \eta_y &= \braket{\Psi_0|\tilde{O}_{2,0}^\dagger \tilde{O}_{0,0}+\tilde{O}_{0,0}^\dagger \tilde{O}_{2,0}+\tilde{O}_{1,0}^\dagger \tilde{O}_{1,0}|\Psi_0} \\
    \eta_x &= \braket{\Psi_0|\tilde{O}_{0,2}^\dagger \tilde{O}_{0,0}+\tilde{O}_{0,0}^\dagger \tilde{O}_{0,2}+\tilde{O}_{0,1}^\dagger \tilde{O}_{0,1}|\Psi_0} .
\end{aligned}
\end{equation}
If there exist $O,H_1,H_2$ and $\ket{\Psi_0}$ such that~\eqref{eq:system} has a solution, then we can encode the problem of finding zeros of the polynomial as this VQA. But in fact, we have two degrees of freedom in $\ket{\Psi_0}$ and six degrees of freedom with $O$. We potentially have at least two additional degrees of freedom with the choice of $H_1$ and $H_2$, which, in turn, modify $\vec c$.  In Fig. \ref{fig:sage} we provide a sample SAGE script to test whether the space we are interested in is empty or not. 

\paragraph*{Summary of the construction} 

We started from Eq. \eqref{eq:example} and applied Sylvester's formula (based on Caley-Hamilton theorem) to the product of unitaries $\prod_{i=0}^{57} U_{58-i}(\phi_{58-i})$ as to obtain $\vec c \cdot \vec b(\vec \phi)$. This allows us to reformulate the original product of unitaries, as usually understood in the VQAs, in terms of a system of complex equations which when enforced to produce the coefficients of the square of the UDE of interest, $D(58,4)$, achieve the encoding of the VQA into the UDE provided at least one solution exists.

%%%%%%%%%%%%%%%%%%%%%%%%
%%%%%%%%%%%%%%%%%%%%%%%%
%%%%% SECTION 4 %%%%%%%%
%%%%% TRANSLATE  %%%%%%%
%%%%%%%%%%%%%%%%%%%%%%%%
%%%%%%%%%%%%%%%%%%%%%%%%

\section{Translating the Main Result to QAOA} 
The reasoning above and most importantly Theorem \ref{thm:digivqeundecidable} and Corollary \ref{cor:VQE} apply equally well to {QAOA} in the digitized version:

\begin{prob}[{QAOA} minimization problem, \cite{bittel2021training}]\label{p:QAOA}\hfill%\vspace{-.8\baselineskip} 
	\begin{description}[noitemsep,leftmargin=0.5cm,font=\normalfont]
		\item[Instance] Two Hamiltonians $H_b,H_c\in \Herm(\mathbb{C}^n)$ and the number of layers $L$ in unary notation\footnote{This means that the length of the input scales linearly with $L$.}.
		
		\item[Optimization version] For a tunable state $\ket{\Psi(\vec \beta, \vec \gamma)}\coloneqq U_b(\beta_L)U_c(\gamma_L)\cdots U_b(\beta_1)U_c(\gamma_1) \ket{\Psi_0}$, where $\ket{\Psi_0}$ is the ground state of $H_b$, $U_b(\beta)=\e^{-\i H_b \beta}$ and $U_c(\gamma)=\e^{-\i H_c \gamma}$,
		find $\vec \beta,\vec \gamma \in \mathbb{R}^d$ which minimize $\OEx{\vec \beta, \vec \gamma}\coloneqq\sandwich{\Psi(\vec \beta, \vec \gamma)}{H_c}{\Psi(\vec \beta, \vec \gamma)}$. 
		
\item[Decision version for $a \in \mathbb{R}$] 
For a tunable state $\ket{\Psi(\vec \beta, \vec \gamma)}\coloneqq U_b(\beta_L)U_c(\gamma_L)\cdots U_b(\beta_1)U_c(\gamma_1) \ket{\Psi_0}$, where $\ket{\Psi_0}$ is the ground state of $H_b$, $U_b(\beta)=\e^{-\i H_b \beta}$ and $U_c(\gamma)=\e^{-\i H_c \gamma}$,
		decide if there exists  $\vec \beta,\vec \gamma \in \mathbb{R}^d$ such that $\OEx{\vec \beta, \vec \gamma}\coloneqq\sandwich{\Psi(\vec \beta, \vec \gamma)}{H_c}{\Psi(\vec \beta, \vec \gamma)} \le a$.
		
	\end{description}
\end{prob}

\begin{theorem}[Undecidability of optimization in {QAOA}] \label{thm:QAOAundecidable}
	Decision version of {QAOA} minimization problem (Problem~\ref{p:QAOA}) is undecidable for $L=58$ layers.
\end{theorem}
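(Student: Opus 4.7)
The plan is to mirror the proof of Theorem~\ref{thm:digivqeundecidable} layer by layer, with the only changes forced by the additional rigidity of the QAOA ansatz: the observable is fixed to $O = H_c$, the initial state $\ket{\Psi_0}$ must be the ground state of $H_b$, and all $2L=116$ unitary factors are generated by only two Hamiltonians $H_b, H_c \in \Herm(\mathbb{C}^n)$. I would begin by applying Sylvester's formula separately to each factor $U_b(\beta_i)=\e^{-\i \beta_i H_b}$ and $U_c(\gamma_i)=\e^{-\i \gamma_i H_c}$, expanding each as a degree-$(n-1)$ polynomial in its own scalar parameter whose coefficients are fixed by the spectrum of the corresponding Hamiltonian. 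Multiplying the $116$ factors out yields the direct analog of Eq.~\eqref{eq:cb}, namely $\prod_{i=0}^{57} U_b(\beta_{58-i})U_c(\gamma_{58-i}) = \vec{c}\cdot \vec{b}(\vec\beta,\vec\gamma)$, where the monomial basis $\vec b$ now ranges over polynomials in all $116$ real parameters separately.

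Next, I would plug this expansion into the QAOA objective and, as in Eq.~\eqref{eq:newsystem}, equate the resulting monomial coefficients to the coefficients $\vec c^U$ of the sum-of-squares form of the universal Diophantine equation $D(58,4)$. Because the UDE depends on only $58$ variables, it is enough to work on a fixed linear slice of parameter space (for instance $\gamma_i = 0$) that isolates $58$ free QAOA parameters matched to the $58$ UDE variables. The natural analog of Conjecture~\ref{conj:system} in this setting is the statement that there exist $H_b, H_c \in \Herm(\mathbb{C}^n)$ with $n$ chosen so that Sylvester truncates compatibly with the UDE degree (e.g.\ $n=5$) for which the resulting system of complex polynomial equations in the entries of $H_b, H_c$ admits a solution. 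The degrees-of-freedom tally from Table~\ref{table:dof} easily accommodates the matching constraints, since we now have $2n^2$ real parameters in the Hamiltonians alone before any additional freedom from the ground-state choice of $\ket{\Psi_0}$.

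The main obstacle, and the essential difference from the VQA case, is the loss of the commutativity shortcut: in the proof of Theorem~\ref{thm:digivqeundecidable} the assumption $[H_i,H_j]=0$ was used to collapse $\prod_i U_i(\phi_i)$ into a single exponential $\e^{-\i\sum_i \phi_i H_i}$ before invoking Sylvester, but here imposing $[H_b,H_c]=0$ would collapse the QAOA product to $\e^{-\i(\sum_i \beta_i)H_b}\e^{-\i(\sum_i\gamma_i)H_c}$ and erase all but two effective parameters. I therefore have to allow $[H_b, H_c]\neq 0$ and expand the product layer-by-layer, so that the monomial count grows combinatorially in $L$ rather than being controlled by a single application of Sylvester. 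Once this combinatorial bookkeeping is carried out and the QAOA analog of Conjecture~\ref{conj:system} is assumed, the reduction from $D(58,4)$ to the QAOA decision problem proceeds verbatim, and Hilbert's 10th problem delivers the undecidability at $L=58$ layers.
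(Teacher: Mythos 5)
Your proposal diverges substantively from the paper's proof, and it introduces a fatal gap of its own. The paper does the \emph{opposite} of what you do on the commutativity question: it assumes $H_b$ (diagonal) and $H_c$ (symmetric) commute, writes the ansatz as $\prod_i U_c(\gamma_i)U_b(\beta_i)$, and asserts that Theorem~\ref{thm:digivqeundecidable} ``applies directly''; as a fallback it invokes the Bittel--Kliesch reduction from single-layer VQA to QAOA. Your observation that commutativity of $H_b$ and $H_c$ collapses the product to $\e^{-\i(\sum_i\gamma_i)H_c}\,\e^{-\i(\sum_i\beta_i)H_b}$, leaving only two effective parameters rather than the $58$ needed to host the variables of $D(58,4)$, is a legitimate worry about that argument---the VQA proof survives its commutativity assumption only because it has $58$ \emph{distinct} generators $H_1,\dots,H_{58}$, whereas QAOA has two.

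The gap in your own construction is the slice $\gamma_i=0$. Once all $\gamma_i$ vanish, every remaining factor is $U_b(\beta_i)=\e^{-\i\beta_i H_b}$; these all commute with one another because they share the \emph{same} generator, so the circuit collapses to $\e^{-\i(\sum_i\beta_i)H_b}$, a one-parameter family. This is precisely the degeneracy you diagnosed in the commuting case, and it destroys the matching of $58$ independent UDE variables to independent circuit parameters. If instead you keep $[H_b,H_c]\neq 0$ and expand all $116$ factors, the coefficients are no longer produced by a single application of Sylvester's formula, the coefficient-matching system \eqref{eq:newsystem} changes structure entirely (Conjecture~\ref{conj:system} as stated does not cover it), and the combinatorial bookkeeping you acknowledge is required is never carried out---so the claim that the reduction ``proceeds verbatim'' is unsupported. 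The device the paper actually relies on to sidestep all of this is the Bittel--Kliesch reduction from single-layer VQA to QAOA, which transfers hardness without re-deriving the polynomial system; your proposal does not use it.
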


\begin{proof}
%	First, let us consider the digitized variant. 
Consider $L$ unitary operators parametrized by $\vec{\gamma} \in \mathbb{R}^L$ acting on some prepared state $\ket{\Psi(\vec{\beta})}$ which is a function of $\vec{\beta} \in \mathbb{R}^L$. Then for any $L$ we define the VQA ansatz with parameter $\vec{\gamma}$
    \begin{align}
        \ket{\Psi(\vec{\gamma})} = \prod_{i=0}^{L-1} U_c(\gamma_{L-i}) \ket{\tilde{\Psi}_0(\vec{\beta})},
    \end{align}
where the VQA layers of unitaries act on a parameterized state 
    \begin{align}
        \ket{\tilde{\Psi}_0(\vec{\beta})} = \prod_{i=0}^{L-1} U_b(\beta_{L-i})\ket{\Psi_0},
    \end{align}
and, as before, $U_b \eqqcolon {\rm e}^{-\i H_b}$ and $U_c \eqqcolon {\rm e}^{-\i H_c}$.
Assume $H_b$ is diagonal, as in Eq. \eqref{Hb} below or by diagonalizing the usual tensor product of $\sigma_x$ matrices as in \cite{farhi2014quantum}, and $H_c$ is symmetric. Since $H_b$ and $H_c$ commute and we form the QAOA ansatz
	\begin{align*}
	    \ket{\Psi(\vec{\beta}, \vec{\gamma})} &= \prod_{i=0}^{L-1} U_c(\gamma_{L-i}) U_b(\beta_{L-i})\ket{\Psi_0} 
	\end{align*}
    and Theorem \ref{thm:digivqeundecidable} applies directly. 
    Notice that to show undecidability, it is sufficient to show the undecidability of this special case. 
    
    In general, QAOA utilizes the Ising model Hamiltonian which is proportional to $\otimes^N \sigma_z$, and corresponds to a symmetric block matrix indeed. This means that $U_b(\beta_{i+1})U_c(\gamma_i) = U_c(\gamma_i)U_b(\beta_{i+1})$. Then Theorem \ref{thm:digivqeundecidable} applies. 
	
	Alternatively, one can use the reduction from single layer VQA to QAOA, which has been introduced by Bittel and Kliesch \cite{bittel2021training},
	and then invoke our Theorem \ref{thm:digivqeundecidable} again.
\end{proof}

	For the convenience of the reader, the reduction of Bittel and Kliesch \cite{bittel2021training} considers a Hilbert space $\H = \mathbb{C}^{2d+1}$ and considers a mixer Hamiltonian $H_b$ that takes the form
	\begin{align}\label{Hb}
	    H_b = \begin{psmallmatrix}
	           E_1& & & & &\\
	              &-E_1& & & &\\
	              & & \ddots & & &\\
	              & & & E_d & & \\
	              & & &  & -E_d&\\
	              & & &  &  & 1  \
	    \end{psmallmatrix}
	\end{align}
	where $|E_i|<1$ for all $i\in[d]$. For the optimization Hamiltonian $H_c$ (which amounts to the VQE Hamiltonian), consider
    \begin{align}
        H_c&=O\oplus 0 +\tau\left(\ket{+}_{2d}\bra{2d+1}+\ket{2d+1}\bra{+}_{2d}\right), 
    \end{align}
	where $\tau \in \mathbb{R}$ is some constant to be fixed, $\ket{+_{2d}}=\sum_{j=1}^{2d}\ket{j}/\sqrt{2d}$. The observable $O$ is defined as follows:
	
	\begin{align}
	    O_{i, j}=\begin{cases}
	        O_{i, j}^{\prime} & i \neq j, \\
	        -\sum_{\alpha=1}^{2 d} O_{\alpha, j}^{\prime} & i=j,
	    \end{cases}
	\end{align}
	where $O^{\prime}:=\frac{d}{8}A\otimes \begin{psmallmatrix}1&1\\1& 1\end{psmallmatrix}$
	and $A\in \{0,1 \}^{d\times d}$. Note that
	$O\oplus 0$ refers to $O$ being embedded in the first $2d$-dimensional summand of $\mathcal{H}$. The ground state energy of the mixer Hamiltonian $H_b$ is  $\lambda_{\min}(H_b)=-1$ as deducted from Eq. \eqref{Hb} with the ground state being $\ket{2d+1}$. Then, applying the optimization Hamiltonian and using the fact that $\ket{+_{2d}}$ is an eigenstate of $O\oplus 0$ yields
    \begin{equation}
	\begin{aligned}
	\ket{\Psi(\gamma)}&\coloneqq U_c(\gamma)\ket{2d+1}\\
	&=\cos(\tau\gamma)\ket{2d+1}+\i\sin(\tau\gamma)\ket{+_{2d}}.\
		\end{aligned}
	\end{equation}
	Overall, the \emph{variational state} $\ket{\Psi(\beta,\gamma)}$ can be written as 
	\begin{equation*}
	\begin{aligned}
	\ket{\Psi(\beta,\gamma)}&=U_b(\beta)U_c(\gamma)\ket{2d+1}\\
	&=
	\cos(\tau\gamma)\e^{\i\beta}\ket{2d+1}\\&\quad +\i\sin(\tau\gamma)\tfrac{1}{\sqrt{2d}}\sum_{j=1}^d\e^{-\i E_j\beta}\ket{2j-1}+\e^{\i E_j\beta}\ket{2j}\, ,
	\end{aligned}
    \end{equation*}
	which we use to derive an expression for the expectation value
	\begin{equation}\nonumber
		\begin{aligned}
	\OEx{\beta,\gamma}&=\bra{\Psi(\beta,\gamma)} H_c\ket{\Psi(\beta,\gamma)}\\
	&=\sin^2(\tau\gamma)f(\beta)+2\tau\cos(\tau\gamma)\sin(\tau\gamma)g(\beta)
	\end{aligned}
	\end{equation}
	with
	\begin{align}\label{eq:O_QAOA1}
	f(\beta)&=\frac{1}{4}\sum_{i,j}(A_{i,j}(\cos(E_i\beta)\cos(E_j\beta)-1)\, ,\\
	g(\beta)&=-\frac{\sin(\beta)}{d}\sum_{i=1}^d\cos(E_j\beta)\,.
	\end{align}
	For $\tau\ll1$ the contribution of $g$ becomes insignificant and $\gamma=\frac{\pi}{2\tau}$ minimizes the objective function as $f(\beta)\leq0$.

%%%%%%%%%%%%%%%%%%%%%%%%
%%%%%%%%%%%%%%%%%%%%%%%%
%%%%% SECTION 5 %%%%%%%%
%%%%%  FURTHER   %%%%%%%
%%%%%%%%%%%%%%%%%%%%%%%%
%%%%%%%%%%%%%%%%%%%%%%%%

\section{Further results on QAOA for open quantum systems}

In this Section, we remark on how the consideration of \emph{open} quantum systems can influence undecideability, with respect to the conjecture presented. In open quantum systems, the system can be modeled as
$U_b(\beta)=\e^{-\i H_b \beta} + \eta_b$ and $U_c(\gamma)=\e^{-\i H_c \gamma} + \eta_c$, for which unitarity is not assured for all 
realizations of noise $\eta_b,\eta_c\in \Herm(\mathbb{C}^n)$. 
For example, consider a level two system with $\eta_b$ given as:
\begin{align}
  \eta_b =   \left(\begin{array}{c c}
z & w \\
-\e^{\i (\theta+\epsilon)} w^{*} & \e^{\i \theta} z^{*}
\end{array}\right),
\end{align}
where $z,w\in \mathbb{C}$, $^*$ denotes complex conjugation and $\theta,\epsilon \in \mathbb{R}$. Even for $0<\epsilon \ll 1$, $\eta_b$ is nonunitary. One can impose $\epsilon$ to be a monotonically increasing function of time, $\epsilon(t)$ with initial condition $\epsilon(0)=0$, such that for small $t$ the evolution can be written as:
\begin{align}
    |\Psi(t)\rangle = (\e^{-\i H t} + \eta_b(t))|\vec 0\rangle.
\end{align}
In the context of QAOA, the parameter $t$ can be substituted with the angle $\vec \beta$.
Notice that this is a perfectly reasonable assumption: {QAOA} is precisely motivated as a variational solver to perform on noisy quantum devices. 

Another possibility is to consider controlled systems such as states that evolve as
\begin{align*}
    \i \frac{d}{dt}|{\Psi(t)}\rangle=\left(H_{}+\epsilon(t) \xi\right) |\vec 0\rangle ,\left.\quad |\Psi\right\rangle|_{t=0}=|\vec{0}\rangle.
\end{align*}
This state evolves according to Hamiltonian $H \in {\rm Herm}(\mathbb{C}^n)$ as well as a Hermitian operator $\xi\in {\rm Herm}(\mathbb{C}^n)$ controlled by a time-dependent external field $\epsilon(t)$. In this context, if $H$ is the target Hamiltonian $H_c$, we see that the addition of the second summand $\epsilon(t) \xi$ makes the evolution operator nonunitary.

Clearly, when one does not have access to the samples of noise, it is impossible to say \emph{a priori} what would be evolution of the 
state under, e.g., {QAOA}. 
Interestingly, however, even in the \emph{clairvoyant} setting, where we knew the realization of the noise \emph{a priori},
and even in the case where the distribution of the noise were singular, i.e., the realizations $\eta_b,\eta_c\in \Herm(\mathbb{C}^n)$ 
of noise were constant throughout time,
possibly very small but sufficient to make the evolution non-unitary, 
it is impossible to decide whether the non-unitary evolution is convergent or not. 

%\textcolor{red}{Georgios: could you please find some very small example: we add some $\eta_b$ of the lowest possible Frobenius norm to get a non-unitary matrix $U_b(\beta)=\e^{-\i H_b \beta} + \eta_b$ from $\e^{-\i H_b \beta}$ of your own choice?}

\begin{prob}[{QAOA} convergence]\label{p:QAOAstability}\hfill%\vspace{-.8\baselineskip} 
	\begin{description}[noitemsep,leftmargin=0.5cm,font=\normalfont]
		\item[Instance] Two Hamiltonians $H_b,H_c\in \Herm(\mathbb{C}^n)$ and the number of layers $L$ in unary notation.
		Two realizations of noise associated with the two Hamiltonians $\eta_b,\eta_c \in \Herm(\mathbb{C}^n)$, which assure that 
		neither $U_b(\beta)=\e^{-\i H_b \beta} + \eta_b$ nor $U_c(\gamma)=\e^{-\i H_c \gamma} + \eta_c$ are unitary,
		
		\item[Decision version] For a tunable state $\ket{\Psi(\vec \beta, \vec \gamma)}\coloneqq U_b(\beta_L)U_c(\gamma_L)\cdots U_b(\beta_1)U_c(\gamma_1) \ket{\Psi_0}$, where $\ket{\Psi_0}$ is the ground state of $H_b$,
		decide whether there exists a limit to the sequence produced by {QAOA}. 
	\end{description}
\end{prob}

\begin{theorem}
\label{thm:qaoaconvundecidable}
The {QAOA} convergence problem is undecidable. 
\end{theorem}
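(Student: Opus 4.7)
The plan is to reduce from a classically undecidable problem about the asymptotic behavior of finitely generated matrix semigroups—most naturally, the result of Blondel and Tsitsiklis that the boundedness (and thus convergence to zero, and trajectory convergence) of all products of a finite set of integer matrices is undecidable, or equivalently a mortality-type problem for matrix products. The conceptual reason this fits the QAOA convergence question is that the noise terms $\eta_b, \eta_c$ lift the layer dynamics out of the compact unitary group: the effective operators $U_b(\beta)=\e^{-\i H_b\beta}+\eta_b$ and $U_c(\gamma)=\e^{-\i H_c\gamma}+\eta_c$ can now be arbitrary linear maps on $\mathbb{C}^n$, so the QAOA state sequence generated as depth grows behaves like the orbit of a pair of arbitrary matrices acting on a fixed vector.

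First I would show that the instance data carry enough degrees of freedom to realize any chosen target pair $A, B \in \mathbb{C}^{n\times n}$ as $(U_b(\beta^*), U_c(\gamma^*))$ for some fixed parameters: once $H_b$ and $\beta^*$ are chosen, the equation $\eta_b = A - \e^{-\i H_b\beta^*}$ determines the required perturbation, and analogously for $\eta_c$. Because the problem statement stipulates $\eta_b, \eta_c \in \Herm(\mathbb{C}^n)$ while $A, B$ need not be Hermitian, I would use a standard block dilation that embeds $A$ and $B$ into $\mathbb{C}^{2n}$ so that the perturbation is Hermitian on the enlarged space while the action on the embedded subspace faithfully implements $A$ and $B$. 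With the dilation in place, fixing $\beta_i=\beta^*$ and $\gamma_i=\gamma^*$ for every layer realizes $\ket{\Psi_L}=(BA)^L\ket{\Psi_0}$, whose $L\to\infty$ behavior is controlled by the joint spectral properties of $\{A,B\}$; more generally, letting $\beta_i, \gamma_i$ range over a finite menu of values realizes arbitrary words in the semigroup $\langle A, B\rangle$. The existence of a limit of the QAOA sequence then corresponds exactly to the asymptotic convergence/boundedness question for this semigroup, which is the undecidable problem we are reducing from.

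The main obstacle, and the step that will absorb most of the work, is the dilation: one must verify that the enlargement preserves the undecidability of the underlying matrix problem and that it can be realized while keeping $\eta_b, \eta_c$ Hermitian and \emph{independent of the variational parameters}, since the noise realizations are fixed as part of the instance rather than chosen per layer. A secondary technicality is to align the quantifier structure—whether $\vec\beta, \vec\gamma$ are part of the instance, chosen as a witness, or quantified over in the decision statement—with the precise formulation of the matrix problem (all-products convergence versus existence of a convergent product), and to argue that the "clairvoyant" setting described in the preceding paragraph is general enough to absorb either formulation. Once the dilation is specified and the quantifiers are matched, the undecidability of the QAOA convergence problem follows immediately from the undecidability of the reduced matrix-semigroup problem.
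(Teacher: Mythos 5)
Your approach coincides with the paper's: Theorem~\ref{thm:qaoaconvundecidable} is proved there by a one-line reduction from the Blondel--Tsitsiklis result on the undecidability of stability/boundedness for switched linear systems \cite{blondel2000boundedness}, which is exactly the source problem you reduce from. In fact you go further than the paper, which leaves the reduction entirely implicit---your identification of the Hermiticity constraint on $\eta_b,\eta_c$ and of the quantifier-matching issue names precisely the details the paper's citation-only proof does not supply.
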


\begin{proof}
We can prove this by reduction from \cite{blondel2000boundedness}, which shows that the stability of a switched linear system is undecidable.
\end{proof}

In practice, the pulses that implement unitary gates in NISQ devices do so in a discrete and finite fashion. Therefore, we reformulate Problem \eqref{p:QAOAstability} to a \emph{digitized version} as follows.

\begin{prob}[Digitized {QAOA} convergence]\label{p:QAOAstabilityDigital}\hfill%\vspace{-.8\baselineskip} 
	\begin{description}[noitemsep,leftmargin=0.5cm,font=\normalfont]
		\item[Instance] 
		Two Hamiltonians $H_b,H_c\in \Herm(\mathbb{C}^n)$ and the number of layers $L$ in unary notation.
		Parameters $\beta_i, \gamma_i \in \mathbb{D}$ come from some finite, discrete set
		 $\mathbb{D}$ of numbers representable by a single word in a digital computer.
		\item[Decision version] For a tunable state $\ket{\Psi(\vec \beta, \vec \gamma)}\coloneqq U_b(\beta_L)U_c(\gamma_L)\cdots U_b(\beta_1)U_c(\gamma_1) \ket{\Psi_0}$, where $\ket{\Psi_0}$ is the ground state of $H_b$, $U_b(\beta)=\e^{-\i H_b \beta}$ and $U_c(\gamma)=\e^{-\i H_c \gamma}$,
		decide whether there exists a limit to the sequence produced by {QAOA}. 
	\end{description}
\end{prob}

\begin{prop}\label{th:QAOAstabilityDigital}
Digitized {QAOA} convergence is undecidable. 
\end{prop}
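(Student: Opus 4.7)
The plan is to adapt the proof of Theorem~\ref{thm:qaoaconvundecidable} by observing that the undecidability result of Blondel~\cite{blondel2000boundedness} invoked there is already stated for a \emph{finite} alphabet of matrices with rational entries: the switching signal is an arbitrary sequence drawn from this finite alphabet, so the reduction never relied on a continuum of parameter values in the first place. Consequently, digitizing the parameter set $\mathbb{D}$ should not disturb the reduction, provided the finite collection of matrices needed by the switched-system construction is still realizable as one-step QAOA evolutions with labels in $\mathbb{D}$.

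Concretely, I would proceed in three steps. First, extract the finite matrix alphabet $\{M_1,\dots,M_k\}$ appearing in the Blondel--Tsitsiklis reduction used for the continuous version. Second, choose Hamiltonians $H_b, H_c$ acting on a sufficiently large ambient Hilbert space $\mathbb{C}^n$, together with a discrete set $\mathbb{D}$ large enough that each $M_j$ can be identified with the one-step evolution $U_b(\beta_j)U_c(\gamma_j)$ for some pair $(\beta_j,\gamma_j)\in\mathbb{D}^2$. This likely requires a unitary dilation of the generically non-unitary $M_j$, so that each acts as $M_j$ on a distinguished subspace while the ancillary block is trivial for the convergence question. Third, verify that under any schedule $(\beta_{i_1},\gamma_{i_1}),\dots,(\beta_{i_L},\gamma_{i_L})\in\mathbb{D}^{2L}$ the QAOA iterates coincide, on the relevant subspace, with the trajectory of the switched system driven by $M_{i_L}\cdots M_{i_1}$; convergence of the QAOA state sequence is then equivalent to boundedness of the switched trajectory, whose decidability is precluded by Blondel's theorem.

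The main obstacle is the second step: showing that every finite matrix alphabet appearing in Blondel-style constructions can be faithfully embedded into $\{U_b(\beta)U_c(\gamma):(\beta,\gamma)\in\mathbb{D}^2\}$ for some digitized $\mathbb{D}$ and a single fixed pair $(H_b,H_c)$. One potentially needs to absorb a residual non-unitary correction into an enlarged register, or to invoke the non-unitary generalization discussed earlier in this section (with noise terms $\eta_b,\eta_c$ indexed by discrete labels), so that the one-step operators actually match the required $M_j$. Once such a realization is exhibited, the undecidability transfers verbatim from the continuous version to the digitized one, and Proposition~\ref{th:QAOAstabilityDigital} follows.
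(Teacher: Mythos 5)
Your proposal follows the same route as the paper: because the parameters are drawn from a finite discrete set $\mathbb{D}$, the one-step maps $U_b(\beta_i)U_c(\gamma_i)$ form a finite matrix vocabulary, and one appeals to the Blondel--Tsitsiklis undecidability of boundedness/convergence of arbitrary products from a finite alphabet (Problems~\ref{p:BMP} and~\ref{p:USR}, Theorem~\ref{Th:TwoMatrices}). The difference is that you explicitly isolate the step the paper omits. To obtain a reduction \emph{from} the undecidable switched-system problem \emph{to} digitized QAOA convergence, one must realize the specific rational matrices of the Blondel--Tsitsiklis alphabet as one-step QAOA evolutions for a single fixed pair $(H_b,H_c)$ and some discrete $\mathbb{D}$. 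The paper's proof instead argues in the opposite direction --- that deciding QAOA convergence ``is an instance of'' BMP/USR --- and then asserts that ``the proof also covers QAOA''; as written, this shows only that QAOA convergence is no harder than BMP, not that it inherits BMP's undecidability. Your step two is therefore not a pedantic refinement but the actual missing content of the argument.

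Your second observation is also essential and likewise unaddressed by the paper: in Problem~\ref{p:QAOAstabilityDigital} as stated, $U_b$ and $U_c$ are genuine unitaries (the noise terms $\eta_b,\eta_c$ of Problem~\ref{p:QAOAstability} do not appear), so every finite product has operator norm one, boundedness is trivially decidable, and the joint spectral radius is at most one. Any undecidability must therefore come either from the convergence-versus-boundedness distinction for unitary products or, as you suggest, from reinstating the non-unitary perturbations so that the Blondel--Tsitsiklis matrices can actually be matched. Neither your proposal nor the paper closes this realizability gap, so both arguments are incomplete as proofs; but your version names the obstruction correctly, whereas the paper's proof passes over it silently.
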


\begin{proof}%[Problem \eqref{p:QAOAstabilityDigital}]
Since $\beta_i,\gamma_i$ come from some finite, discrete ordered set, by setting $U_{i} = U_b(\beta_i)U_c(\gamma_i)$, the vocabulary $\CU_{(L)} = \{ U_1,\ldots, U_L \}$ is finite. 
The convergence of products of operators of the form $\Pi_U= U_n \ldots U_1$ can be determined by computing the joint spectral radius $\hat{\rho}(\mathcal{U})$. This is an instance of Problem \eqref{p:BMP}. It is known that convergence of $\Pi_U$ holds only if $\hat{\rho}(\mathcal{U}_{(L)}) <1$ which is itself an instance of Problem \eqref{p:USR}. It was shown in \cite{blondel2000boundedness}, that the USR is undecidable implying that BMP is undecidable too. The proof is based on a reduction of the problem of probabilistic finite automata emptiness to USR and it amounts into showing that $\hat{\rho}(\mathcal{U}_{(L)}) >1$ and $\hat{\rho}(\mathcal{U}_{(L)})\leq 1$. The proof also covers QAOA. %\cite{blondel2000boundedness}.
\end{proof}

%\begin{theorem}\label{p:QAOAstabilityL1}
%Digitized {QAOA} convergence is undecidable for the product of two matrices ($n=1$). 
%\end{theorem}

\begin{prop}%\label{cor:QAOA}
\label{th:QAOAstabilityDigital}
Assuming Conjecture \ref{conj:system} holds, there exists no recursive function to decide whether the digitized {QAOA} convergence problem has a solution.
\end{prop}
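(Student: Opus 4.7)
The plan is to chain the conditional undecidability of digitized VQA (Corollary \ref{cor:VQE}) through the VQA--to--QAOA reduction underlying Theorem \ref{thm:QAOAundecidable} and then into the convergence problem via essentially the joint--spectral--radius argument used in the unconditional proposition stated just above this one. First I would invoke Corollary \ref{cor:VQE} to conclude that, under Conjecture \ref{conj:system}, no recursive function decides the digitized VQA decision problem at $L=58$ layers, and then lift this to the digitized QAOA decision problem using the Bittel--Kliesch embedding from the proof of Theorem \ref{thm:QAOAundecidable}, at the same value of $L$.

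The main step is to reduce the digitized QAOA decision problem (conditionally uncomputable under Conjecture \ref{conj:system}) to the digitized QAOA convergence problem (Problem \ref{p:QAOAstabilityDigital}). Mirroring the unconditional proof just above, I would exploit the finiteness of $\mathbb{D}$ to assemble the finite vocabulary $\mathcal{U}_{(L)} = \{U_b(\beta) U_c(\gamma) : (\beta,\gamma) \in \mathbb{D}^2\}$ of single--layer unitaries and cast the existence of a limit of $\Pi_U = U_L \cdots U_1 \ket{\Psi_0}$ in terms of the joint spectral radius $\hat{\rho}(\mathcal{U}_{(L)})$, as in \cite{blondel2000boundedness}. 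I would then augment $H_b$ and $H_c$ with one ancilla register, arranging block--diagonally that the augmented dynamics enter an absorbing regime (yielding $\hat{\rho} \leq 1$ together with a genuine limit state) precisely when some parameter tuple achieves the energy threshold encoded by Theorem \ref{thm:digivqeundecidable}, and otherwise oscillate with $\hat{\rho} > 1$. Because the existence of such a tuple is uncomputable under Conjecture \ref{conj:system}, so is convergence of the augmented QAOA iteration.

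The principal obstacle is engineering this augmentation so that the joint--spectral--radius dichotomy faithfully tracks the VQA/QAOA threshold. QAOA permits only two fixed Hamiltonians together with angles drawn from the finite discrete set $\mathbb{D}$, so the ancilla--based absorption switch must be encoded entirely inside $H_b$ and $H_c$ without disturbing the UDE--side construction from Theorem \ref{thm:digivqeundecidable}. I expect this to reduce to book--keeping: work in the Hilbert--space summand $\mathbb{C}^{2d+1} \oplus \mathbb{C}$, give $\tilde{H}_b$ the original mixer block plus a small eigenvalue on the ancilla and $\tilde{H}_c$ a coupling that vanishes off the absorbing block, and check that the ground state of $\tilde{H}_b$ continues to initialize the dynamics correctly at the new site. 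As a cross--check, the statement also follows directly from the unconditional proposition stated just above, so the value of the construction I propose lies primarily in exhibiting a direct linkage between QAOA convergence and the UDE--based undecidability route, rather than routing through switched linear systems alone.
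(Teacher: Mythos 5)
The paper offers no explicit proof of this proposition: it is evidently intended as an immediate consequence of the unconditional proposition stated just above it, which shows (via the reduction to the bounded-matrix-products and unit-spectral-radius problems of Blondel and Tsitsiklis) that digitized QAOA convergence is undecidable outright. Since undecidability is by definition the nonexistence of a recursive decision procedure, the conditional statement follows a fortiori, and the hypothesis on Conjecture~\ref{conj:system} is redundant. Your closing ``cross-check'' is therefore not a cross-check but the entire argument the paper relies on, and on that ground your proposal does contain a correct proof.

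The bulk of your proposal, however, pursues a genuinely different and far more ambitious route: reducing the UDE-based digitized VQA/QAOA decision problem to the convergence problem by augmenting $H_b$ and $H_c$ with an ancilla block so that the dynamics converge exactly when some parameter tuple meets the energy threshold of Theorem~\ref{thm:digivqeundecidable}. This is not what the paper does, and as written it has a concrete gap: you assert, but do not establish, that an absorbing-block switch can be encoded inside two fixed Hamiltonians with angles drawn from $\mathbb{D}$ so that $\hat{\rho}\le 1$ together with existence of a limit holds precisely when the threshold is attained. Tying a spectral-radius dichotomy of a finitely generated matrix semigroup to the value of a $58$-parameter expectation functional is the whole difficulty, not book-keeping; in particular, convergence of $\Pi_U\ket{\Psi_0}$ depends on the order and multiplicity of the factors in the product, whereas the threshold condition is an existential statement about a single parameter tuple, and nothing in your sketch bridges that mismatch. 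If you want the direct UDE-to-convergence linkage, the construction must actually be carried out; as it stands, present the one-line derivation from the unconditional proposition as the proof and flag the direct reduction as an open construction.
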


\begin{prop}%\label{c:QAOAstabilityL1}
\label{th:QAOAstabilityDigital}
Digitized QAOA is undecidable for the product of two matrices ($L=1$).
\end{prop}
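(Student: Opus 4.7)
The plan is to specialize the reduction from the previous proposition to the two-matrix regime, invoking the sharpest form of the joint spectral radius undecidability result of Blondel and Tsitsiklis.

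I would first observe that the reduction from digitized QAOA convergence to the USR/BMP problem, as carried out in the proof of the preceding proposition, depends only on the cardinality of the vocabulary $\CU$ of single-layer operators $U_1 = U_b(\beta_1)U_c(\gamma_1)$, rather than on the number $L$ of layers. For $L=1$ with the digitized parameter set $\mathbb{D}$ restricted to yield exactly two distinct pairs $(\beta,\gamma)$ and $(\beta',\gamma')$, the vocabulary becomes $\CU = \{U_1, U_1'\}$, a pair of operators; the QAOA sequence then corresponds to all finite switched products drawn from this pair, and its convergence is governed by $\hat\rho(\CU)$.

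Next, I would invoke the stronger Blondel--Tsitsiklis theorem that deciding whether the joint spectral radius of a pair of matrices with integer entries is at most $1$ is already undecidable. This sharpens the undecidability used earlier, which applied a priori to vocabularies of unbounded size, and shows that USR/BMP remains undecidable for vocabularies of size \emph{exactly} two. If one can encode an arbitrary such pair $(A, B)$ into the single-layer QAOA operators $(U_1, U_1')$, the claim follows immediately from the reduction of the preceding proposition, restricted to the two-element vocabulary.

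The main obstacle is realizing an arbitrary contractive integer-matrix pair $(A,B)$ by unitary products of the form $e^{-\i H_b \beta}e^{-\i H_c \gamma}$: purely unitary evolution has operator norm $1$ and cannot represent arbitrary contractions directly. I expect the cleanest route to be through the open-system extension described in the preceding section, allowing the non-unitary noise terms $\eta_b, \eta_c$, so that the one-layer operator $U_1 = (e^{-\i H_b \beta}+\eta_b)(e^{-\i H_c \gamma}+\eta_c)$ can be an arbitrary bounded linear map on $\mathbb{C}^n$; then the two choices of parameters in $\mathbb{D}$ realize the Blondel--Tsitsiklis pair $(A,B)$ exactly. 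A parallel route uses a Stinespring-type dilation, embedding $(A,B)$ as the upper-left $n \times n$ blocks of unitaries on $\mathbb{C}^{2n}$ and verifying that boundedness of the dilated products is equivalent to boundedness of the original ones. Either construction transfers the undecidability of the two-matrix USR to digitized QAOA with $L=1$, establishing the claim.
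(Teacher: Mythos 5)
Your proposal and the paper's proof both lean on the Blondel--Tsitsiklis result (Theorem~\ref{Th:TwoMatrices}) that the unit-spectral-radius and bounded-matrix-products problems are undecidable already for a pair of matrices, but the reductions run in opposite directions. The paper starts from the $L$-element vocabulary $\CU = \{U_1,\dots,U_L\}$ of the preceding proposition and \emph{compresses} it into a two-element vocabulary $\tilde\CU = \{U, T\}$, with $U = \diag(U_1,\dots,U_L)$ block-diagonal and $T$ a cyclic block shift, then cites the equivalence $\hat\rho(\CU)\le 1 \Leftrightarrow \hat\rho(\tilde\CU)\le 1$ from \cite{blondel2000boundedness}. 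You instead start from the two-matrix undecidability directly and try to \emph{realize} an arbitrary hard pair $(A,B)$ as single-layer QAOA operators. Your direction is arguably the more natural one for the statement as phrased, and it forces you to confront an issue the paper passes over silently: a genuinely noiseless single-layer QAOA operator $\e^{-\i H_b\beta}\e^{-\i H_c\gamma}$ is unitary, every product of unitaries has norm $1$, and the joint spectral radius of any unitary vocabulary is exactly $1$ -- so convergence/boundedness questions are trivially decidable there. Identifying that the non-unitarity (the noise terms $\eta_b,\eta_c$ of the open-system section) is what carries the undecidability is a real contribution of your write-up; the paper's proof neither states nor uses this, and its Problem statement for the digitized convergence problem even reverts to unitary $U_b, U_c$.

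That said, your proposal has one concretely broken step and one step left open. The Stinespring-dilation fallback does not work: if $A$ and $B$ sit as upper-left $n\times n$ blocks of unitaries $\tilde A,\tilde B$ on $\CC^{2n}$, then products $\tilde A\tilde B\cdots$ are unitary and hence always bounded, and the upper-left block of $\tilde A\tilde B$ is $AB$ plus cross terms through the ancilla, not $AB$ itself; so boundedness of the dilated products is neither equivalent to, nor informative about, boundedness of the products of $A$ and $B$. The open-system route is the viable one, but it still requires exhibiting Hamiltonians, parameters, and noise realizations $\eta_b,\eta_c$ such that the two one-layer operators equal (or at least share the joint-spectral-radius behaviour of) an arbitrary rational matrix pair from the Blondel--Tsitsiklis construction; you flag this as the main obstacle but do not carry it out. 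To be fair, the paper's own proof has the mirror-image gap: it never verifies that the pair $\{U,T\}$ it constructs is itself an instance of digitized QAOA with $L=1$, nor that the underlying $L$-matrix instance is non-unitary enough for the question to be nontrivial. So your attempt is a faithful reconstruction of the intended argument, more self-aware about where it is incomplete, but not a complete proof -- and you should drop the dilation alternative.
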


\begin{proof}%[Corollary \eqref{c:QAOAstabilityL1}]
To ease the notation, we skip the argument of the unitaries since it is fixed. The set $\CU_{(1)} = \{ U_b, U_c \}$ defined a dictionary out of which one can form words of length $L$. The decidability problem for $\CU_{(1)}$ can be deduced from the proof of Theorem \eqref{th:QAOAstabilityDigital}: First, one defines the set of two $nL\times nL$ matrices $\tilde{\CU}_{(1)}= \{ U,T\}$ where $U = \mathrm{diag}(U_1,\ldots, U_L )$ and 
\begin{align}
    T = \begin{pmatrix} 0 & \mathds{1}_{n(L-1)} \\ \mathds{1}_n & 0 \end{pmatrix},
\end{align}
where $\mathbb{I}_n$ is the $n\times n$ identity matrix. Then it is easy to verify that $\rho(\CU_{(n)})\leq 1$ if and only if $\rho(\tilde{\CU}_{(1)})\leq 1$ \cite{blondel2000boundedness}. This is based on the proof of Theorem \ref{Th:TwoMatrices}.
\end{proof}

%%%%%%%%%%%%%%%%%%%%%%%%
%%%%%%%%%%%%%%%%%%%%%%%%
%%%%% SECTION 6 %%%%%%%%
%%%%%  CONCLUDE   %%%%%%
%%%%%%%%%%%%%%%%%%%%%%%%
%%%%%%%%%%%%%%%%%%%%%%%%

\section{Conclusions}\label{sec:conclusions}

We have shown that training VQA is undecidable, assuming a certain conjecture on the solvability of a certain structured system of equations in complex numbers.

The investigation has shown that the undecidability of VQA via universal Diophantine quations is not completely straightforward. Instead, the undecidability is related to the solvability of an overdetermined, but highly-structured complex-variate system of equations. Algorithmically, confirmation of the conjecture would confirm the broadly shaping consensus that VQA is rather ``heuristic'', 
with behaviour hard to characterize and whose use across a wide enough class of possible instances cannot be fully reliable.
While this may not be an issue in many applications of optimization in engineering and chemistry, it seems unlikely that applications in number theory \cite{yan2022factoring} would be successful.

Note that affirmation of the conjecture does not rule out computability in specific instances, 
or even specific broad classes of Hamiltonians.
Indeed, computability has been shown for large-girth regular graphs \cite{basso2021quantum},
projectors \cite{zhou2020quantum},
and the {S}herrington-{K}irkpatrick model \cite{farhi2019quantum,basso2021quantum}.

%%%%%%%%%%%%%%%%%%%%%%%%
%%%%%%%%%%%%%%%%%%%%%%%%
%%% AKNOWLEDGEMENTS %%%%
%%%%%%%%%%%%%%%%%%%%%%%%
%%%%%%%%%%%%%%%%%%%%%%%%

\begin{acknowledgments}
This work has been partially supported by project MIS 5154714 of the National Recovery and Resilience Plan Greece 2.0 funded by the European Union under the NextGenerationEU Program.
\end{acknowledgments}

\section*{DISCLAIMER}
This paper was prepared for information purposes and is not a product of HSBC Bank Plc. or its affiliates.
Neither HSBC Bank Plc. nor any of its affiliates make any explicit or implied representation or warranty and none of them accept any liability in connection with this paper, including, but not limited to, the
completeness, accuracy, reliability of information contained herein and the potential legal, compliance,
tax or accounting effects thereof. Copyright HSBC Group 2025.

%%%%%%%%%%%%%%%%%%%%%%%%
%%%%%%%%%%%%%%%%%%%%%%%%
%%% BIBLIOGRAPHY %%%%%%%
%%%%%%%%%%%%%%%%%%%%%%%%
%%%%%%%%%%%%%%%%%%%%%%%%

%%%=============================================
\bibliographystyle{myapsrev}
\bibliography{refs} 
%%%=============================================

%%%%%%%%%%%%%%%%%%%%%%%%
%%%%%%%%%%%%%%%%%%%%%%%%
%%%%% APPENDIX %%%%%%%%%
%%%%%%%%%%%%%%%%%%%%%%%%
%%%%%%%%%%%%%%%%%%%%%%%%

\appendix
%\section{Appendixes}
%\section{A little more on appendixes}

\section{Background and Related Work}
\label{sec:background}

\subsection{Variational Quantum Algorithms}\label{sec:appVQA}

Variational Quantum Algorithms comprise a category of hybrid quantum-classical algorithms \cite{cezero2021}. A classical optimization problem is translated into a physical quantum problem associated with a Hamiltonian $H$. The optimization process involves minimizing the ground state energy corresponding to Hamiltonian $H$. Once this mapping is established, execution on a quantum computer requires preparing an initial state $|\psi_0\rangle$ and allowing it to evolve according to $H$ through a parametrized quantum circuit, with parameter vector $\vec\phi \in \mathbb{R}^L$, also referred to as the variational form. Consequently, the evolved state $|\psi(\vec\phi)\rangle = U(\vec\phi)|\psi_0\rangle$ is obtained.
\begin{figure}[bh!]
\centering
\includegraphics[width=0.40\textwidth]{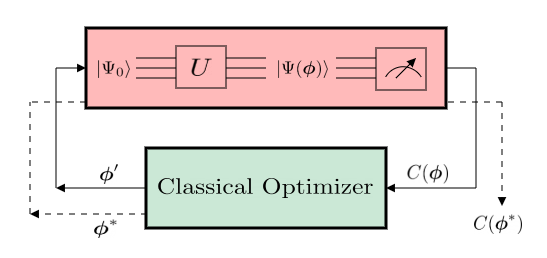}
\caption{Schematic of the architecture of variational quantum algorithms. The classical optimizer undertakes the task to update the parameters $\vec \phi \to \vec \phi'$ such that the cost function is minimized. The loop halts once optimal parameters have been found.}
\label{fig:1}
\end{figure}
The unitary operator $U$, that is being implemented by a set of gates in the parametrized quantum circuit, approximates $ {\rm e}^{-{\rm i}H(\vec \phi)t}$ within some error margin where $t$ denotes time and $H \in \Herm(\mathbb{C}^n)$ the (parametrized) system Hamiltonian. The optimal values of the parameters are not known so a random initialization is due. Subsequently, expectation values of specific observables are measured and a cost function $C(\vec \phi)$ is constructed that is fed into a classical optimizer that minimizes it. Once an optimal value is found, the classical computer updates the values for the parameters $\vec \phi$ into the quantum circuit and the procedure repeats.

QAOA is a variational algorithm whose objective is to approximately solve certain types of combinatorial optimization problems such as \MaxCut~and~\ThreeSAT~problems. QAOA uses the adiabatic theorem which states that a slow enough transition between two Hamiltonians $H_b$ and $H_c$ guarantees that the ground state $|\psi_b\rangle$ of $H_b$ slowly transitions to the ground state $|\psi_c\rangle$ of $H_c$ as long as the Hamiltonians are gapped and level crossings are avoided ~\cite{albash2018adiabatic}.

Typically, the QAOA first creates an initial state $|+\rangle^{\otimes n}$, by applying $n$ Hadamard gates to $|\vec 0\rangle^{\otimes n}$, that evolves according to the mixer Hamiltonian $H_b = \sum_{i=0}^{n-1}{X}_{i}$. The quantum circuit applies successively $L$ layers of the unitaries $\prod_n\e^{-\i \beta_n {H}_b}\e^{-\i \gamma_n {H}_c}$, $n=0,
\ldots, L$ to create a trial state $|\Psi(\vec \beta,\vec \gamma)\rangle$. The parameters $\vec \beta, \vec \gamma$ are then fed to a classical optimizer so as to be updated for the next iteration of the algorithm.

VQAs in general, and QAOA specifically, needs to overcome several difficulties in order to perform reasonably efficient computations. One such difficulty is the challenge to find variational forms such that resulting cost function provides with optimal solutions $\hat{p}_t$, at each iteration $t$, that minimize the error $\varepsilon_p$ to the actual but unknown optimum $p^*$ of the optimization problem:
\begin{align}
    \varepsilon_p = |\hat p - p^*|.
\end{align}
However, this is heavily constrained by the hardware capabilities of NISQ devices including gate fidelities, decoherence times, circuit depth, etc. Once a solution $\hat p_t$ has been found, the classical optimizer needs to perform efficient search within the parameter landscape such that the estimation of the next iteration is closer to $p^*$ and the error converges to zero, $\lim_r \varepsilon_p^{(r)}\to 0$, for finite $r\in \mathbb{Z}_{\geq 1}$ being the iteration number.

\subsection{Complexity of Classical Subproblems in Variational Quantum Algorithms}

There exists a multitude of formalizations for variational quantum algorithms. In this work we use the notation of \cite{bittel2021training}, presented the following: 

\begin{prob}[{VQA} minimization, oracular formulation, paraphrasing \cite{bittel2021training}]\label{p:VQA_O}\hfill
	%\vspace{-.8\baselineskip} 
	\begin{description}[noitemsep,leftmargin=0.5cm,font=\normalfont]
		
		\item[Instance] 
		A set of generators $\{H_i \}_{i \in \{1,\dots, L\}}$ 
		and an observable $O$ acting on $\H=(\mathbb{C}^2)^{\otimes N}$, 
		given in terms of their Pauli basis representation. 
		
		\item[Oracle access] 
		We set $\ket{\Psi(\vec \phi)}\coloneqq U_L(\phi_L)\cdots U_1(\phi_1)\ket{\vec 0}$ with
		\begin{align}\label{Eq:1}
		    U_i(\phi_i)=\e^{-\i H_i \phi_i}.
		\end{align}
		 The oracle $\mc O$ returns
		$\OEx{\vec \phi} \coloneqq \sandwich{\Psi(\vec \phi)}{O}{\Psi(\vec \phi)}$, given $\vec \phi$, up to any desired polynomial additive error. 
		
		\item[Optimization version]
		Find $\vec \phi \in \mathbb{R}^L$ that minimizes $\OEx{\vec \phi}$ provided access to $\mc O$. 

		\item[Decision version for $a \in \mathbb{R}$]
		Decide whether there exists $\vec \phi \in \mathbb{R}^L$ whose value $\OEx{\vec \phi} \le a$ provided access to $\mc O$. 
	\end{description}
\end{prob}

\begin{theorem}[Hardness of {VQA} optimization, decision version with the oracular formulation, paraphrasing \cite{bittel2021training}]
	\label{thm:VQA1}
	Assuming $\P\neq \NP$ there is no deterministic classical algorithm that solves the decision version of Problem~\ref{p:VQA_O} in polynomial time.
\end{theorem}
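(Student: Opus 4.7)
The plan is to reduce a standard $\NP$-complete problem---threshold MaxCut---to the oracular decision version of Problem~\ref{p:VQA_O}, following a Cook-style strategy. If a polynomial-time classical algorithm decided the VQA threshold problem given oracle access to $\OEx{\vec\phi}$ up to any polynomial additive error, composing it with the reduction would yield a polynomial-time decider for MaxCut, contradicting $\P\neq\NP$.

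Concretely, given a graph $G=(V,E)$ with $|V|=N$ and integer threshold $k$, I would construct in polynomial time the cost observable $O=-\tfrac{1}{2}\sum_{(i,j)\in E}(I-Z_iZ_j)$ on $\H=(\mathbb{C}^2)^{\otimes N}$, which is diagonal in the computational basis with spectrum $\{-c(z):z\in\{0,1\}^N\}$, where $c(z)$ is the number of edges cut by $z$. Its Pauli basis description has size $\mathcal{O}(|E|)$. For the generators $\{H_i\}_{i=1}^{L}$ I would pick a QAOA-style alternation of the transverse-field mixer $\sum_j X_j$ and the cost Hamiltonian itself, with $L$ polynomial in $N$ and chosen large enough that every computational basis state $\ket{z}$ lies in the image of the ansatz $U_L(\phi_L)\cdots U_1(\phi_1)\ket{\vec 0}$ for some parameter vector. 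Setting the decision threshold to $a=-k+\tfrac{1}{2}$, I would argue that there exists $\vec\phi\in\mathbb{R}^L$ with $\sandwich{\Psi(\vec\phi)}{O}{\Psi(\vec\phi)}\le a$ iff $G$ admits a cut of size at least $k$. The reverse implication is immediate: since $O$ is diagonal, $\sandwich{\Psi(\vec\phi)}{O}{\Psi(\vec\phi)}$ is a convex combination of the integer eigenvalues $-c(z)$, so if this combination is at most $-k+\tfrac{1}{2}$ then some $z$ in its support must satisfy $c(z)\ge k$. The forward implication uses expressivity: the optimal bitstring $z^\star$ can be prepared exactly as a computational basis state at suitable parameter values, yielding $\sandwich{\Psi(\vec\phi)}{O}{\Psi(\vec\phi)}=-c(z^\star)\le -k$.

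The main obstacle will be making the expressivity and precision claims quantitative. I must exhibit an explicit polynomial-depth family of generators whose parameter-reachable image contains every computational basis state; a universal gate set compiled into the QAOA alternation form---e.g., combining $\e^{-\i\phi\sum_j X_j}$ mixing layers with $\e^{-\i\phi\sum_{(i,j)\in E} Z_iZ_j}$ phase-separation layers, padded with additional generators to ensure reachability---is known to suffice at polynomial depth. In parallel, the oracle's polynomial additive error on $\OEx{\vec\phi}$ must be smaller than the integer gap $1$ between consecutive cut values; requesting precision $\tfrac{1}{4}$ suffices and remains within the allowed polynomial-precision budget. Once these two ingredients are established, a hypothetical polynomial-time decider for Problem~\ref{p:VQA_O} applied to the encoded instance decides MaxCut in polynomial time, contradicting $\P\neq\NP$ and establishing the theorem.
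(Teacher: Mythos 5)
Your reduction from threshold MaxCut is a reasonable high-level strategy, but it contains two genuine gaps; note also that the paper itself does not prove Theorem~\ref{thm:VQA1} --- it imports it from Bittel and Kliesch, whose argument is structured quite differently from yours.

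The first gap is the step you flag yourself: exact reachability of every computational basis state $\ket{z}$ under alternation of the \emph{global} mixer $\e^{-\i\phi\sum_j X_j}$ (the same angle on every qubit) with the diagonal cost Hamiltonian, starting from $\ket{\vec 0}$. That this is achievable exactly at polynomial depth is not ``known'' in the form you need, and ``padded with additional generators'' is not a construction. The fix is simple --- take $L=N$ with $H_j=X_j$ individually, so that $\phi_j\in\{0,\pi/2\}$ prepares any $\ket{z}$ up to phase --- but it must be stated. The second and more serious gap is that your contradiction with $\P\neq\NP$ silently requires the oracle to be classically simulable in polynomial time on the constructed instances: a polynomial-time decider for Problem~\ref{p:VQA_O} only yields a polynomial-time MaxCut algorithm if each query $\OEx{\vec\phi}$ can itself be answered in polynomial time. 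For the polynomial-depth entangling circuits you propose, evaluating $\sandwich{\Psi(\vec\phi)}{O}{\Psi(\vec\phi)}$ classically is not known to be tractable, so the chain ``decider $+$ reduction $\Rightarrow$ $\P=\NP$'' breaks (an algorithm equipped with a powerful oracle deciding an $\NP$-hard problem contradicts nothing). The single-qubit-rotation ansatz also closes this hole, since the states are products and the $2$-local observable is evaluable in time $\LandauO(|E|)$. Finally, be aware that the cited proof places the hardness elsewhere: Bittel and Kliesch construct instances whose objective is efficiently evaluable and whose \emph{continuous parameter landscape} has exponentially many local minima encoding the cuts, giving hardness already at $L=1$; your repaired argument instead locates all the hardness in circuit expressivity and the ground-state energy of a classical Ising observable, which suffices for the theorem as stated but not for the single-layer strengthening (Theorem~\ref{thm:VQA2}) that the paper also relies on.
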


Alternatively, with computational basis of the Hilbert space $\H$ of dimension $\dim(\H)=2^N\eqqcolon n$ we formulate the following problem:

\begin{prob}[{VQA} minimization problem, paraphrasing \cite{bittel2021training} ]\label{p:VQA:appendix}\hfill%\vspace{-.8\baselineskip} 
\begin{description}[noitemsep,leftmargin=0.5cm,font=\normalfont]
\item [Instance] 
An initial state $\ket{\Psi_0}\in \mathbb{C}^n$, a set of generators $\{H_i \}_{i \in \{1,\dots, L\}}\in \Herm(\mathbb{C}^n)$, where $L$ is the number of layers and an observable $O\in\Herm(\mathbb{C}^n)$. 

\item[Optimization version]
For $\ket{\Psi(\vec \phi)}\coloneqq U_L(\phi_L)\cdots U_1(\phi_1)\ket{\Psi_0}$ with $U_i(\phi_i)=\e^{-\i H_i \phi_i}$, 
find $\vec \phi \in \mathbb{R}^L$ that minimizes $\OEx{\vec\phi}\coloneqq~\sandwich{\Psi(\vec \phi)}{O}{\Psi(\vec \phi)}$. 

\item[Decision version for $a \in \mathbb{R}$]
For $\ket{\Psi({\vec \phi})}\coloneqq U_L(\phi_L)\cdots U_1(\phi_1)\ket{\Psi_0}$ with $U_i(\phi_i)~=~\e^{-\i H_i \phi_i}$, 
	determine if there exists 
	$\vec \phi \in \mathbb{R}^L$ for which $\sandwich{\Psi(\vec \phi)}{O}{\Psi(\vec \phi)} \leq a$. 
		\end{description}
\end{prob}

\begin{theorem}[Hardness of {VQA} optimization, paraphrasing \cite{bittel2021training}]\label{thm:VQA2}
	Optimization version of {VQA} minimization problem (Problem~\ref{p:VQA}) is \NP-hard. 
	Decision version of {VQA} minimization problem (Problem~\ref{p:VQA}) is \NP-hard for $L=1$ layer.
\end{theorem}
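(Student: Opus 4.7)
The plan is to prove the stronger second clause — NP-hardness of the decision version at $L=1$ — since NP-hardness of the optimization version for general $L$ then follows trivially: one pads an $L=1$ instance by setting $H_2=\dots=H_L=0$, and any oracle solving the optimization version answers the decision version by comparing its returned minimum to $a$. For the $L=1$ case the strategy is a polynomial-time many-one reduction from a combinatorial NP-hard problem, most conveniently MaxCut on graphs of bounded degree, or equivalently the Ising Hamiltonian ground-state problem.

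Given an instance $G=(V,E)$ of MaxCut with threshold $k$, I would take the observable $O$ to be the MaxCut Hamiltonian, whose eigenvalues on computational basis states equal the negatives of cut sizes, and then construct a Hermitian generator $H_1$, an initial state $\ket{\Psi_0}$, and a threshold $a\in\mathbb{R}$ so that $\min_{\phi\in\mathbb{R}}\sandwich{\Psi(\phi)}{O}{\Psi(\phi)}\le a$ iff $G$ admits a cut of size at least $k$. A natural template is to work in an enlarged Hilbert space of dimension $2d+1$, as in the construction of Bittel and Kliesch recalled inside the proof of Theorem~\ref{thm:QAOAundecidable}, choosing $H_1$ with spectrum $\{\pm E_1,\dots,\pm E_d,1\}$ whose distinct eigenvalues address the $d$ candidate configurations, and designing $\ket{\Psi_0}$ so that the single free parameter $\phi$ can interfere the eigencomponents to isolate any one of them.

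The main obstacle is proving correctness: with a single real parameter $\phi$ one cannot independently assign phases to each eigenvector of $H_1$, so the minimum over $\phi\in\mathbb{R}$ need not access every computational state. The key step is a spectral argument. Because $H_1$ is diagonal in a chosen basis, the expectation $\sandwich{\Psi(\phi)}{O}{\Psi(\phi)}$ is a univariate trigonometric polynomial of the form $\sum_{j,k}\alpha_j^{\ast}\alpha_k O_{jk}\e^{\i(\lambda_j-\lambda_k)\phi}$ whose frequencies are the eigenvalue gaps $\lambda_j-\lambda_k$ of $H_1$ and whose coefficients are bilinear in the amplitudes $\alpha_j=\braket{j|\Psi_0}$. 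By choosing the gaps sufficiently structured (e.g.\ rationally independent, or forming a known commensurate arrangement) and the amplitudes appropriately, the minimum can be forced to equal the smallest diagonal entry of $O$ restricted to the encoded solution subspace, which is $-\mathrm{MaxCut}(G)$ up to a known additive constant. It remains to verify that $H_1$, $O$, $\ket{\Psi_0}$, and $a$ all admit poly-bit representations for input of size $|V|+|E|$, which makes the reduction polynomial-time and completes the hardness argument.
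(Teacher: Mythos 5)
The paper itself offers no proof of this theorem: it is stated as a paraphrase of a result of Bittel and Kliesch, and the only place the underlying construction appears is the recap inside the proof of Theorem~\ref{thm:QAOAundecidable} (the diagonal $H_b$ with spectrum $\{\pm E_1,\dots,\pm E_d,1\}$ and the observable $O$ built from an adjacency matrix $A$). Your skeleton correctly identifies that construction and the correct logical ordering (prove the $L=1$ decision case, deduce the rest), so the outline is on the right track.

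However, the step you label as ``the key spectral argument'' is precisely where the real proof does its work, and as written it contains a genuine gap. With a single parameter $\phi$, the curve $\phi\mapsto(\cos(E_1\phi),\dots,\cos(E_d\phi))$ does \emph{not} pass exactly through the vertices of $[-1,1]^d$, so your claim that ``the minimum can be forced to equal the smallest diagonal entry of $O$ restricted to the encoded solution subspace'' is false in general: the minimum of the trigonometric polynomial $f(\phi)=\tfrac14\sum_{i,j}A_{i,j}(\cos(E_i\phi)\cos(E_j\phi)-1)$ only \emph{approximates} $-\MaxCut(G)$. Two ingredients are needed to close this. First, a concrete choice of the $E_i$ with polynomially many bits together with a quantitative reachability lemma showing that for every sign pattern $\vec s\in\{\pm1\}^d$ there is a $\phi$ of polynomial description length with $|\cos(E_i\phi)-s_i|\le\delta$ for all $i$; your alternatives ``rationally independent'' (not representable in poly bits) or ``a known commensurate arrangement'' (unspecified) do not settle this, and the tension you flag in your last sentence is exactly the unresolved point. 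Second, because the attainment is only approximate, the reduction must tolerate an additive error in the cut value, which is why the source argument leans on the \APX-hardness (constant-factor inapproximability) of \MaxCut~rather than on exact threshold comparison. Without these two pieces the correctness claim of the reduction does not go through.
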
\label{sec:VQAopt_QAOA}
This result implies that even when simplifying the problem to a single layer, the decision version remains computationally difficult. 
VQA minimization involves finding a global minimum of a multimodal objective landscape, a \NP-hard problem. The similarity lies in that their objective functions exhibit non-linear and non-convex properties, leading to numerous local minima and complicating the search for global optima. Famously, Blum and Rivest \cite{Blum1992} showed that training a 3-neuron multi-layer perceptron, which has a similar landscape, is \NP-Complete. (See \cite{ma2002} for relevant results and \cite{Slzer2021} for some recent work on the topic from the neural network reachability point of view.)

Next we proceed to discuss analogous results about QAOA, the algorithm we will dominantly use for the main result of this paper in Sec. \ref{MainResult}.

\begin{prob}[{QAOA} minimization problem, \cite{bittel2021training}]\label{p:QAOA:app}\hfill%\vspace{-.8\baselineskip} 
	\begin{description}[noitemsep,leftmargin=0.5cm,font=\normalfont]
		\item[Instance] Two Hamiltonians $H_b,H_c\in \Herm(\mathbb{C}^n)$ and the number of layers $L$ in unary notation\footnote{This means that the length of the input scales linearly with $L$.}.
		
		\item[Optimization version] For a tunable state $\ket{\Psi(\vec \beta, \vec \gamma)}\coloneqq U_b(\beta_L)U_c(\gamma_L)\cdots U_b(\beta_1)U_c(\gamma_1) \ket{\Psi_0}$, where $\ket{\Psi_0}$ is the ground state of $H_b$, $U_b(\beta)=\e^{-\i H_b \beta}$ and $U_c(\gamma)=\e^{-\i H_c \gamma}$,
		find $\vec \beta,\vec \gamma \in \mathbb{R}^d$ which minimize $\OEx{\vec \beta, \vec \gamma}\coloneqq\sandwich{\Psi(\vec \beta, \vec \gamma)}{H_c}{\Psi(\vec \beta, \vec \gamma)}$. 
		
\item[Decision version for $a \in \mathbb{R}$] 
For a tunable state $\ket{\Psi(\vec \beta, \vec \gamma)}\coloneqq U_b(\beta_L)U_c(\gamma_L)\cdots U_b(\beta_1)U_c(\gamma_1) \ket{\Psi_0}$, where $\ket{\Psi_0}$ is the ground state of $H_b$, $U_b(\beta)=\e^{-\i H_b \beta}$ and $U_c(\gamma)=\e^{-\i H_c \gamma}$,
		decide if there exists  $\vec \beta,\vec \gamma \in \mathbb{R}^d$ such that $\OEx{\vec \beta, \vec \gamma}\coloneqq\sandwich{\Psi(\vec \beta, \vec \gamma)}{H_c}{\Psi(\vec \beta, \vec \gamma)} \le a$.
		
	\end{description}
\end{prob}

\begin{theorem}[Hardness of optimization in {QAOA}, \cite{bittel2021training}]\label{thm:QAOA}
	Decision version of {QAOA} minimization problem (Problem~\ref{p:QAOA}) is \NP-hard for $L=1$ layer.
\end{theorem}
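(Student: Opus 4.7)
The plan is to establish $\NP$-hardness by a polynomial-time reduction from the single-layer VQA minimization problem, whose $\NP$-hardness is stated in Theorem~\ref{thm:VQA2}. The construction mirrors the embedding sketched in the main text following the proof of Theorem~\ref{thm:QAOAundecidable}; the task is to verify that it is gap-preserving and runs in polynomial time.

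First, I would take an instance of single-layer VQA minimization: an initial state $\ket{\Psi_0} \in \mathbb{C}^{2d}$, a generator $H \in \Herm(\mathbb{C}^{2d})$, an observable $O \in \Herm(\mathbb{C}^{2d})$ obtained from a $\MaxCut$-style encoding of a binary matrix $A \in \{0,1\}^{d\times d}$, and a threshold $a$. By Theorem~\ref{thm:VQA2}, deciding whether some $\phi \in \mathbb{R}$ satisfies $\sandwich{\Psi(\phi)}{O}{\Psi(\phi)} \leq a$ is $\NP$-hard.

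Second, I would lift this instance to the Hilbert space $\mathbb{C}^{2d+1}$ and construct the QAOA Hamiltonians as in Eq.~\eqref{Hb}: the mixer $H_b$ is diagonal with eigenvalues $\pm E_1, \ldots, \pm E_d, 1$ (choosing generic $|E_i| < 1$ so that the unique ground state is $\ket{2d+1}$), and the cost Hamiltonian is $H_c = O \oplus 0 + \tau(\ket{+_{2d}}\bra{2d+1} + \ket{2d+1}\bra{+_{2d}})$ for a small coupling $\tau > 0$ to be chosen below. The rank-one coupling term is what allows the trivial ground state of $H_b$ to probe the subspace in which $O$ acts nontrivially, so that the $L=1$ QAOA expectation becomes sensitive to the original VQA landscape.

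Third, I would expand the single-layer QAOA state $\ket{\Psi(\beta,\gamma)} = U_b(\beta) U_c(\gamma)\ket{2d+1}$ using the diagonal form of $H_b$ and the low-rank perturbation in $H_c$, arriving at an expression of the form $\OEx{\beta,\gamma} = \sin^2(\tau\gamma)\, f(\beta) + 2\tau \cos(\tau\gamma)\sin(\tau\gamma)\, g(\beta)$, where $f(\beta)$ encodes the VQA objective through trigonometric combinations of the $E_i$ and the entries of $A$, and $g(\beta)$ is uniformly bounded. Setting $\gamma = \pi/(2\tau)$ isolates $f(\beta)$, so that a polynomial-time decider for the QAOA threshold would yield a polynomial-time decider for the VQA threshold. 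The main obstacle will be calibrating $\tau$ so that the decision gap is preserved: since single-layer VQA inherits a constant hardness-of-approximation gap from $\MaxCut$, choosing $\tau$ inverse-polynomial in the instance size keeps the $g$-perturbation strictly smaller than this gap while preserving polynomial-time computability of the entire encoding, completing the reduction.
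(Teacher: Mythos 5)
Your proposal is correct and follows essentially the same route as the paper: a reduction from single-layer VQA minimization using the Bittel--Kliesch construction, with the diagonal mixer of Eq.~\eqref{Hb}, the cost Hamiltonian $H_c = O \oplus 0 + \tau(\ket{+}_{2d}\bra{2d+1} + \ket{2d+1}\bra{+}_{2d})$, and the expectation value $\sin^2(\tau\gamma) f(\beta) + 2\tau\cos(\tau\gamma)\sin(\tau\gamma) g(\beta)$ isolated at $\gamma = \pi/(2\tau)$. The paper's stated proof is only the one-line reduction claim, with these same details worked out in the main text, so your additional care about calibrating $\tau$ to preserve the decision gap is a welcome refinement rather than a divergence.
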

\begin{proof}
	To show this, we show a reduction from single layer {VQA} to {QAOA}, which implies that Problem~\ref{p:QAOA} is \NP-hard.
\end{proof}

The proof of Theorem~\ref{thm:QAOA} is based on the proof that VQA is \NP-hard even for $L=1$. In the subsequent section, we show that \NP-hardness is a very conservative estimate. In particular, we show that certain problems associated to VQAs and QAOAs are  undecidable. 

\subsection{Undecidability in Quantum Computing}

Undecidability is a much stronger notion than \NP-hardness.  
Instead of proving a conditional statement on the non-existence of a polynomial-time algorithm for obtaining a solution, e.g. Proposition \ref{thm:VQA1},
undecidability is an unconditional statement on the non-existence of a finite-time algorithm for indicating whether a solution satisfying a certain objective even exists.

While there is a long history of work on undecidability in physics \cite{svozil1993randomness},
starting with the influential paper \cite{da1991undecidability}
on the undecidability in classical mechanics,
only a handful of undecidability results are known in quantum information theory and quantum computing. 
Following the pioneering work of Lloyd \cite{lloyd1993quantum,lloyd2017uncomputability},  Cubitt et al. \cite{cubitt2015undecidability} have shown the undecidability of the spectral gap, even in one dimension \cite{bausch2020undecidability}, and uncomputability \cite{bausch2021uncomputability} of phase diagrams.
Smith \cite{smith2006three} has shown that 
no fully general form of the quantum adiabatic theorem can exist that yields computable upper bounds on adiabatic convergence times.
Bondar and Pechen \cite{bondar2020uncomputability} have shown uncomputability of a digitized quantum optimal control.

\subsection{Undecidability in Control Theory}

There is, however, a much longer tradition of the study of undecidability and uncomputability in control theory \cite{blondel2000survey}.
Using the notion of the generalized spectral radius \cite{daubechies1992sets}, \cite{blondel2000boundedness} show that the convergence of products of operators of the form of $U_L(\phi_L)\ldots U_1(\phi_1)$ is undecidable. 

The joint spectral radius \cite{rota1960note} of a set of matrices is a measure of the maximal asymptotic growth rate that can be obtained by forming long products of matrices taken
from a set $\CA = \{A_1,\ldots,A_n \}$ of real $n\times n$ matrices.

Consider products $A_{(t)} = A_{t}\ldots A_1$. For example, if $\CA = \{ A_1,A_2\}$ and $t=2$,
\begin{align}
    A_{(2)} = 
    \begin{cases}
        A_1^2 \\ A_2^2 \\ A_1A_2 \\ A_2A_1 \
    \end{cases}.
\end{align}
The spectral radius of a square matrix $A \in \CA$ is defined as:
\begin{align}
    \rho(A) = \mathrm{max}\{|\lambda| \text{ such that } \lambda \text{ is an eigenvalue of } A  \}.
\end{align}
A natural generalization of the spectral radius that serves as a measure of growth ($t\to \infty$) of matrix products $A_{(t)}$ is the \emph{joint spectral radius}:
\begin{align}
    \hat{\rho}(\CA) &= \lim_{t\to \infty} \sup \hat{\rho}_t(\CA),
\end{align}
where for some matrix norm $\lVert \cdot \rVert$ we define
\begin{align}
    \hat{\rho}_t(\CA) = \max_{A_i\in \CA}\lVert A_t\ldots A_1 \rVert^{1/t}.
\end{align}

A very similar concept is that of the
\emph{generalized spectral radius} $\rho(\CA)$ is defined as:
\begin{align}
    \rho(\CA) = \lim_{t\to \infty} \sup \rho_t(A_t\ldots A_1),
\end{align}
where
\begin{align}
    \rho_t(\CA) = \max_{A_i \in \CA} \rho(A_t\ldots A_1)^{1/t}.
\end{align}
It is known that $\rho_t(\CA) \leq \rho(\CA)$ \cite{Wang2013}, while for any finite set of matrices $\CA$ it holds that \cite{BERGER199221}
\begin{align}
    \hat{\rho}(\CA) = \rho(\CA).
\end{align}

\begin{prob}[Unit Spectral Radius ({USR})]\label{p:USR}\hfill
  \begin{description}
     \item[Input] A finite set $\CA$ of $n\times n$ $\mathbb{Q}$-valued matrices.
     \item[Decision version for $a = 1$] Is the joint spectral radius $\rho(\CA)\leq a$?
\end{description}
\end{prob}

\begin{prob}[Bounded Matrix Products (BMP)]\label{p:BMP}\hfill
  \begin{description}
     \item[Input] A finite set $\CA$ of $n\times n$ $\mathbb{Q}$-valued matrices.
     \item[Decision version]  Is the set of all matrix products bounded?
\end{description}
\end{prob}

Attacking Problem \eqref{p:USR} and Problem \eqref{p:BMP} amounts to the asymptotic computation of the joint spectral radius for $\CA$. If it is found to be less than one then the problem is decidable and the corresponding (switched) linear system is stable \cite{blondel2000boundedness}.

%\begin{theorem}[Blondel and Tsitsiklis, \cite{blondel2000boundedness}]
%PFA emptiness is reducible to Bounded
%Matrix Products and to Unit Spectral Radius. The
%reduction can be chosen so that the resulting sets of
%matrices $\CA$ have only two elements.
%\end{theorem}

\begin{theorem}[Blondel and Tsitsiklis,  \cite{blondel2000boundedness}]\label{Th:TwoMatrices}
The problems BMPs
and USR are undecidable. They
remain undecidable even in the special case where $\CA$
consists of only two matrices.
\end{theorem}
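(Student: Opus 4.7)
The plan is to prove undecidability of BMP and USR in two stages: first establish undecidability for an unrestricted (but finite) number of matrices by reducing from a classical undecidable problem, and then compress the construction down to exactly two matrices using a block-matrix encoding. As a preliminary observation, I would note that BMP and USR are tightly linked through the joint spectral radius: $\hat{\rho}(\CA)<1$ forces all products to converge to zero, $\hat{\rho}(\CA)>1$ forces them to grow without bound, and the boundary behavior $\hat{\rho}(\CA)=1$ is exactly where BMP becomes subtle. Consequently a Turing reduction between the two problems follows by probing the joint spectral radius at threshold~$1$ via rescaling $\CA \mapsto \alpha \CA$, so it suffices to obtain undecidability for one of them.

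For the first stage I would reduce from the emptiness problem for probabilistic finite automata, shown undecidable by Paz. Given a PFA with stochastic transition matrices $\{M_1,\dots,M_k\}$, initial distribution $\pi_0$, acceptance vector $f$, and threshold $\lambda\in(0,1)$, the question of whether there exists an input word $w$ with $\pi_0^\T M_w f > \lambda$ is undecidable. The standard technical step is to absorb $\pi_0,f,\lambda$ into a family of augmented matrices $\{A_1,\dots,A_k\}$ such that some product of the $A_i$ has spectral radius exceeding~$1$ iff the PFA accepts some word above threshold. A construction based on block-diagonal embedding with a scaling factor $1/\lambda$ makes the joint spectral radius of $\CA$ cross~$1$ precisely on acceptance, which yields undecidability of USR (and hence BMP) for $k$-matrix instances.

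For the second stage I would carry out the block-matrix reduction already sketched in the preceding Proposition in the paper. Given $\CA=\{A_1,\dots,A_L\}\subset\mathbb{R}^{n\times n}$, define
\[
U = \diag(A_1,\dots,A_L) \in \mathbb{R}^{nL\times nL}, \qquad T = \begin{pmatrix} 0 & \mathds{1}_{n(L-1)} \\ \mathds{1}_n & 0 \end{pmatrix},
\]
so that $T$ cyclically permutes blocks. The identity $T^{k} U T^{-k}$ is block-diagonal with blocks permuted, and any product of the $A_i$ of length $m$ can be realized (up to a constant factor growth in norm and a prefix/suffix of $T$'s) as a product in $\{U,T\}$ of length $O(mL)$. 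The key lemma is then $\hat{\rho}(\{U,T\})\le 1$ iff $\hat{\rho}(\CA)\le 1$, which transfers undecidability to the two-matrix case.

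The main obstacle is the first stage: controlling the joint spectral radius at the threshold value~$1$ in the PFA reduction, because the joint spectral radius is known to be discontinuous in $\CA$ and USR at the exact boundary is the delicate case. This is precisely why Blondel and Tsitsiklis rely on a PFA-style probability-gap argument rather than a direct reduction from, say, the Post Correspondence Problem. The second stage is essentially a padding construction once the joint spectral radius identity for the $(U,T)$ pair is verified, which requires only routine bookkeeping on word lengths and norm submultiplicativity.
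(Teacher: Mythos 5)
The paper offers no proof of this theorem at all—it is quoted as an external result of Blondel and Tsitsiklis—and your outline is essentially the argument of that cited reference, fragments of which the paper itself reuses elsewhere (the reduction from probabilistic-finite-automaton emptiness in the proof of digitized QAOA convergence, and the two-matrix compression via $U=\diag(A_1,\dots,A_L)$ together with the cyclic shift $T$). One caveat: your claimed Turing reduction between BMP and USR by rescaling $\CA\mapsto\alpha\CA$ does not work, for exactly the boundary reason you flag yourself—when $\hat{\rho}(\CA)=1$ the products may be bounded or unbounded, and no amount of rescaling lets you separate $\hat{\rho}\le 1$ from $\hat{\rho}>1$ at threshold, so undecidability of one problem does not formally transfer to the other this way. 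Fortunately the detour is unnecessary: the PFA construction yields a dichotomy (either $\hat{\rho}>1$ with some unbounded product, or $\hat{\rho}\le 1$ with all products bounded), so the same hard instances settle BMP and USR simultaneously; you should state and verify that dichotomy explicitly rather than invoke an inter-reduction.
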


%Naively, this contradicts what stated in Corollary 2 and \cite{blondel2000boundedness}.

\subsection{Undecidability in Optimization}

Optimization without any assumptions on the objective function is undecidable.  
This builds upon the work of Mati{\^a}sevi{\v{c}} from the 1970 \cite{matiyasevich1970Diophantineness,englishproof,Matiasevic1993hilbert}, which has resolved Hilbert's tenth problem \cite{hilbert1902mathematical}. Hilbert's tenth problem \cite{hilbert1902mathematical} asked for a general algorithm which, for any polynomial equation with integer coefficients and a finite number of unknowns that can decide whether the equation has a solution with all unknowns taking integer values. That is, given $p \in \mathbb{Z}[x]$ with $x \in \mathbb{R}^d$, decide if $p(x)=0$ has integer solutions.
Since Hilbert's problem is equivalent to the decision problem of whether the real-valued optimization problem,
\begin{align}
\label{polyplussin}
\min_{x\in\mathbb{R}^d} \, p(x)+\sum\limits_{i=1}^d \sin^2(\pi x_i)
\end{align}
has a zero-valued optimum solution, 
where $p(x)$ is the original polynomial \footnote{
%Clearly, the set of zeros of $\sin^2(\pi x_i)$ is the set of integers. 
Cf. \cite[Definition 3.10]{da1991undecidability} or  \cite[Proof of Theorem 3]{zhu2006unsolvability} or a modern survey in \cite{liberti2019undecidability}.}, this can be extended to optimization without the integrality of the variables:

\begin{theorem}[Unconstrained continuous global minimization is undecidable, paraphrasing \cite{zhu2006unsolvability}]\label{th:optund}
Given a continuously differentiable function $f(x_1, \ldots ,x_L)$, unconstrained continuous global minimization seeks to find a global minimal solution of $min f$. The associated decision problem is:
Given $L, f$, a constant $B$, does there exist $x_i, i = 1, 2...,L$ such
that $f(x_1, \ldots ,x_L) \le B$?
There exists no recursive function to decide this decision problem.
\end{theorem}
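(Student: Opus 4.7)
The plan is to reduce Hilbert's 10th problem to the decision version of unconstrained continuous global minimization. By the Matiyasevich--Robinson--Davis--Putnam (MRDP) theorem, deciding whether an integer polynomial $p \in \mathbb{Z}[x_1,\ldots,x_L]$ admits a zero in $\mathbb{Z}^L$ is undecidable; thus any recursive decision procedure for the continuous problem would yield one for Hilbert's 10th problem, which is impossible. This is exactly the line already hinted at by the authors in the displayed equation preceding the theorem statement, and I would formalize it as follows.

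First, given an arbitrary integer polynomial $p(x_1,\ldots,x_L)$, I would construct the auxiliary function
\[
f(x_1,\ldots,x_L) \;=\; p(x_1,\ldots,x_L)^2 \;+\; \sum_{i=1}^L \sin^2(\pi x_i).
\]
This $f$ is $C^\infty$ (in particular continuously differentiable), is non-negative on all of $\mathbb{R}^L$, and the squared polynomial term $p^2$ ensures $f(x)=0$ forces $p(x)=0$. The trigonometric penalty $\sin^2(\pi x_i)$ vanishes precisely when $x_i\in\mathbb{Z}$. Consequently $f(x)=0$ for some $x\in\mathbb{R}^L$ if and only if $p$ admits an integer root. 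One may replace positive-integer MRDP with $\mathbb{Z}$-MRDP without issue (the $\mathbb{Z}$ version is equally undecidable), or else add an extra penalty $\sum_i(\min(0,x_i))^2$ kept in $C^1$ by smoothing, to enforce positivity if desired.

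Next, I would set the threshold $B=0$ and observe that the question "does there exist $x\in\mathbb{R}^L$ with $f(x)\le 0$" is equivalent, since $f\ge 0$ identically, to "does $f$ attain the value $0$", which by the construction above is in turn equivalent to the Diophantine solvability of $p=0$ in $\mathbb{Z}^L$. Since the map $p\mapsto f$ is manifestly effective (the coefficients of $p$ are computed algorithmically into the closed-form description of $f$), a recursive decider for the continuous optimization decision problem would yield a recursive decider for Hilbert's 10th problem, contradicting MRDP.

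The main obstacle -- or rather the only nontrivial ingredient -- is the $\sin^2(\pi x_i)$ penalty trick that converts a discrete integrality constraint into a smooth continuous penalty with exactly the right zero set, and squaring $p$ so that the supremum-type question "$\min f\le 0$" coincides with "$\min f = 0$". Both are standard, and the regularity requirement ($C^1$) is trivially satisfied; effective presentability of $f$ as an instance follows from its closed-form description in terms of polynomials and $\sin$, which is how \cite{zhu2006unsolvability} and \cite{da1991undecidability} formalize the input class in their respective frameworks. Everything else is routine bookkeeping.
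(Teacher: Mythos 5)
Your proposal is correct and takes essentially the same route as the paper: the paper's justification for Theorem~\ref{th:optund} is exactly the reduction from Hilbert's tenth problem (via MRDP) using the smooth penalty objective $p(x)+\sum_{i}\sin^2(\pi x_i)$ displayed just before the theorem statement, which is the construction you formalize with $B=0$. Your version is in fact marginally more careful, since you square $p$ to guarantee nonnegativity of the objective (so that ``$f\le 0$'' coincides with ``$f$ attains $0$''), whereas the paper writes the unsquared $p(x)$, which only works if $p$ is already a sum of squares.
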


We refer to \cite{liberti2019undecidability} for a nice survey, which fixes some issues with the proofs of \cite{zhu2006unsolvability}.

\section{A Table of Notation}

\begin{table}[h!]
\begin{tabular}{llll} \toprule[2pt]
Variable      & Description            & Space                 &Equation(s) \\ \hline
$U_i(\phi_i)$ & parametrized unitaries & $U(n)$                &\eqref{Eq:1},\eqref{eq:classicalopt11}\\
$O$           & VQA operator           & $\Herm(\mathbb{C}^n)$ &\eqref{eq:O},\eqref{eq:classicalopt11}\\
$\vec{\phi}$  & variational parameters & $\mathbb{R}^{58}$     &\eqref{Eq:1},\eqref{eq:classicalopt11} \\
$\ket{\Psi_0}$& initial state          & $\mathcal{H}$         &\eqref{eq:classicalopt11} \\
$q_j(\phi)$   & summand of SOS polynomial & $\mathbb{R}[\vec \phi]$& \eqref{eq:classicalopt10}  \\
$\varkappa_j$ & Sylvester coefficients    &$\mathbb{C}$ & \eqref{eq:matrixexp}, \eqref{eq:cb} \\
$c_j$         & monomial choosing vector  &$\Herm(\mathbb{C}^n)$ & \eqref{eq:cb} \\
$b_j(\vec{\phi})$& monomial vector        &$\mathbb{R}[\vec{\phi}]$& \eqref{eq:cb} \\
$g_i$            & complex polynomial     & $\mathbb{C}[\vec x, \vec w]$ & \eqref{eq:newsystem} \\ \toprule[1.2pt]
System          & Parameters     & Variables  &Equation \\ \hline 
UDE             & $u, x,y,z$       &  $a,b,c,\ldots, \tau,\varphi$     & \eqref{eq:UDE}  \\ \toprule[1.2pt]
System          & \# Variables     & \# DOF &Equation \\ \hline
Initial   & & &\eqref{eq:classicalopt11} \\
New      & & & \eqref{eq:newsystem} \\ \hline 
\end{tabular}
\caption{An overview of the notation.}
\label{Table:Variables}
\end{table}

\section{An Example of a UDE}

\begin{figure}[h!]
\begin{equation}
\label{eq:UDE}
    \begin{aligned}
(elg^2+ \alpha - (b-xy)q^2)^2 + \\  
(q - b^{5^{60}})^2 + \\ 
(\lambda +q^4 - 1 - \lambda b^5)^2 + \\ 
(\theta +2z - b^5)^2 + \\ 
(u+t\theta - l)^2 + \\ 
(y+m \theta - e)^2 + \\ 
(q^{16} - n)^2 + \\ 
\left([g+eq^3+lq^5+(2(e-z\lambda)(1 +xb^5 +g)^4 + \quad \right. \\
\lambda b^5 + \lambda b^5q^4)q^4] [n^2-n] + \quad \\
\left. [q^3-bl + 1+ \theta \lambda q^3 + (b^5 - 2)q^5] [n^2 - 1] - r \right)^2 + \\
    (2 w s^2 r^2 n^2 - p)^2 + \\ 
(p^2 k^2 - k^2 + 1 - \tau^2)^2  + \\ 
(4(c-k s n^2)^2 + \nu - k^2)^2 + \\ 
(r + 1 +hp-h - k)^2 + \\ 
((wn^2 + 1)rsn^2 - a)^2  + \\ 
(2r+ 1 + \varphi - c)^2 + \\ 
(bw+ca-2c+4a\gamma -5\gamma -d)^2 + \\ 
((a^2 - 1)c^2+1 - d^2)^2 + \\ 
((a^2- 1)i^2c^4+1 - f^2)^2 + \\ 
(((a +f^2(d^2 - a))^2 - 1)(2r + 1 +jc)^2 + 1 - (d + of)^2)^2
\end{aligned}
\end{equation}
\caption{Universal Diophantine equation $D(28,2 \cdot 5^{60})$ based on the work of \cite{Jones1982JONUDE, Matiyasevich1993}, where $a, b, c, \ldots, s, t, w, \alpha, \gamma, \nu, \theta, \lambda, \tau, \varphi$
are variables and $u, x, y, z$ are parameters that encode the Turing machine. Notice that the 
Universal Diophantine equation 
is a sum-of-squares polynomial, which is compatible with \eqref{eq:classicalopt1}.}
\label{fig:UDE}
\end{figure}

\section{Further clarifications and a simple example}
\label{sec:example2}

In order to assist in understanding the form of the equation system, let us consider how a potential two-layer example, instead of the 58-layer case, which is our target, corresponds to an encoding of a degree-8 polynomial. We shall see the precise form of the vectors $\vec c$ and $\vec b$. With $H \in \Herm(\mathbb{C}^5)$, such that the expansion runs for $0\leq j \leq 4$, and utilizing Sylvester's formula, we obtain the following:
\begin{equation}\label{example}
   \begin{aligned}
    c_{(0,0)} & = \varkappa_0\mathds{1} \\
    c_{(1,0)} & = \varkappa_1 H_1 \\
    c_{(0,1)} & = \varkappa_1 H_2 \\
    c_{(2,0)} & = \varkappa_2 H_1^2 \\ 
    c_{(0,2)} &= \varkappa_2 H_2^2  \\
    c_{(1,1)} &= 2\varkappa_2 H_1H_2 \\
    c_{(3,0)} &= \varkappa_3 H_1^3 \\
    c_{(0,3)} &= \varkappa_3 H_2^3 \\
    c_{(2,1)} &= 3\varkappa_3 H_1^2H_2 \\
    c_{(1,2)} &= 3\varkappa_3 H_1H_2^2 \\
    c_{(4,0)} &= \varkappa_4 H_1^4 \\
    c_{(0,4)} & = \varkappa_4 H_2^4 \\
    c_{(3,1)} &= 4\varkappa_4 H_1^3H_2 \\
    c_{(1,3)} &= 4\varkappa_4 H_1H_2^3 \\
    c_{(2,2)} &= 6\varkappa_4 H_1^2H_2^2, \
\end{aligned} 
\end{equation}
where, for simplicity, we absorbed the complex unit into $\{H_i\}$. All dependencies on $\vec \phi$ are encoded in $\vec{b}(\vec{\phi})$. Furthermore, observe that for each monomial degree $i=j+k$, for $c_{(j,k)}$, only the coefficient $\varkappa_i$ is relevant. From this example, it becomes clear that the indexing of $\vec c$ directly reveals the form of the corresponding monomial on the generators $\{H_i\}_{i=1}^{58}$. Furthermore, all $c(j,k)$ contain equal degrees of freedom since they depend, thorough $\varkappa_{j+k}$, on all of the variational parameters,
\begin{align}
    c_{(j,k)} = c_{(j,k)}(\{\phi_i\}, \{H_{58}\}).
\end{align}

\section{An SAGE Script}

\begin{figure}[h!]
\centering
 \begin{verbatim}
var('x,y,a1,a2,a4,b1,b2,c1,c2,d1,d2,a,b')
g1(x,y)=a1*x^2  + a2*y^2 - a 
g2(x,y)=b1*x^2  + b2*y^2 - b
g3(x,y)=c1*x^2  + c2*y^2
g4(x,y)=d1*x^2  + d2*y^2
R.<x,y,a1,a2,b1,b2,c1,c2,d1,d2> = 
   CC[x,y,a1,a2,b1,b2,c1,c2,c3,d1,d2]
I = Ideal(g1, g2, g3, g4)
S = R.quotient_Rig(I)
X = SpecS
S.is_empty()
\end{verbatim}
\caption{A SAGE script for testing the non-emptiness of the moduli space of quadrics over $\mathds{CP}^4$.}
\label{fig:sage}
\end{figure}

\end{document}